\DeclareFontShape{T1}{lmr}{b}{sc}{<->ssub*cmr/bx/sc}{}
\DeclareFontShape{T1}{lmr}{bx}{sc}{<->ssub*cmr/bx/sc}{}
\numberwithin{equation}{section}
\newcommandx{\unsure}[2][1=]{\todo[linecolor=red,backgroundcolor=red!25,bordercolor=red,#1]{#2}}
\newcommandx{\change}[2][1=]{\todo[linecolor=blue,backgroundcolor=blue!25,bordercolor=blue,#1]{#2}}
\newcommandx{\info}[2][1=]{\todo[linecolor=OliveGreen,backgroundcolor=OliveGreen!25,bordercolor=OliveGreen,#1]{#2}}
\newcommandx{\improvement}[2][1=]{\todo[linecolor=black,backgroundcolor=black!25,bordercolor=black,#1]{#2}}
\newcommandx{\thiswillnotshow}[2][1=]{\todo[disable,#1]{#2}}
\crefname{proposition}{Proposition}{Propositions}
\crefname{equation}{}{}
\newtheorem{theorem}{Theorem}[section]
\newtheorem{lemma}[theorem]{Lemma}
\newtheorem{proposition}[theorem]{Proposition}
\newtheorem{corollary}[theorem]{Corollary}
\theoremstyle{definition}
\newtheorem{definition}[theorem]{Definition}
\newtheorem{example}[theorem]{Example}
\newtheorem{assumption}[theorem]{Assumption}
\newtheorem{remark}[theorem]{Remark}
\crefname{assumption}{Assumption}{Assumptions}
\crefname{definition}{Definition}{Definitions}
\crefname{corollary}{Corollary}{Corollaries}
\crefname{enumi}{item}{items}
\newsavebox\myboxA
\newsavebox\myboxB
\newlength\mylenA
\newcommand*\xoverline[2][0.75]{%
  \sbox{\myboxA}{$\m@th#2$}%
  \setbox\myboxB\null
  \ht\myboxB=\ht\myboxA%
  \dp\myboxB=\dp\myboxA%
  \wd\myboxB=#1\wd\myboxA
  \sbox\myboxB{$\m@th\overline{\copy\myboxB}$}
  \setlength\mylenA{\the\wd\myboxA}
  \addtolength\mylenA{-\the\wd\myboxB}%
  \ifdim\wd\myboxB<\wd\myboxA%
    \rlap{\hskip 0.5\mylenA\usebox\myboxB}{\usebox\myboxA}%
  \else
    \hskip -0.5\mylenA\rlap{\usebox\myboxA}{\hskip 0.5\mylenA\usebox\myboxB}%
  \fi}
\newcommand{\TO}{\mathbb{T}^1}
\DeclareMathOperator{\argmax}{argmax}
\newcommand{\setsymbols}{C_{k\times k}}
\newcommand{\rapprox}[2]{#1^{[#2]}}
\newcommand{\tftn}{truncated Floquet-Bloch transform~}
\newcommand{\dfrt}{\mc{F}}
\newcommand{\expi}[1]{e^{\i#1}}
\DeclareMathOperator{\N}{\mathbb{N}}
\DeclareMathOperator{\Z}{\mathbb{Z}}
\DeclareMathOperator{\R}{\mathbb{R}}
\DeclareMathOperator{\C}{\mathbb{C}}
\renewcommand{\i}{\mathbf{i}}
\renewcommand{\tilde}{\widetilde}
\DeclareMathOperator{\diag}{diag}
\renewcommand{\epsilon}{\varepsilon}
\let\emptyset\varnothing
\renewcommand{\i}{\mathbf{i}}
\renewcommand{\tilde}{\widetilde}
\newcommand{\ip}[2]{\left\langle #1, #2 \right\rangle}
\newcommand{\mc}[1]{\mathcal{#1}}
\newcommand{\abs}[1]{\left\lvert#1\right\rvert}
\newcommand{\norm}[1]{\left\lVert#1\right\rVert}
\newcommandx{\silvio}[2][1=]{\todo[linecolor=blue,backgroundcolor=blue!25,bordercolor=blue,#1]{Silvio: #2}}
\title[Truncated Floquet-Bloch Transform]{Truncated Floquet-Bloch transform for computing the spectral properties of large finite systems of resonators}
\begin{document}
 \author[H. Ammari]{Habib Ammari \,\orcidlink{0000-0001-7278-4877}}
\address{\parbox{\linewidth}{Habib Ammari\\
 ETH Z\"urich, Department of Mathematics, Rämistrasse 101, 8092 Z\"urich, Switzerland, \href{http://orcid.org/0000-0001-7278-4877}{orcid.org/0000-0001-7278-4877}}.}
\email{habib.ammari@math.ethz.ch}
\thanks{}

\author[S. Barandun]{Silvio Barandun\,\orcidlink{0000-0003-1499-4352}}
 \address{\parbox{\linewidth}{Silvio Barandun\\
 ETH Z\"urich, Department of Mathematics, Rämistrasse 101, 8092 Z\"urich, Switzerland, \href{http://orcid.org/0000-0003-1499-4352}{orcid.org/0000-0003-1499-4352}}.}
 \email{silvio.barandun@sam.math.ethz.ch}

\author[A. Uhlmann]{Alexander Uhlmann\,\orcidlink{0009-0002-0426-6407}}
 \address{\parbox{\linewidth}{Alexander Uhlmann\\
 ETH Z\"urich, Department of Mathematics, Rämistrasse 101, 8092 Z\"urich, Switzerland, \href{http://orcid.org/0009-0002-0426-6407}{orcid.org/0009-0002-0426-6407}}.}

\email{alexander.uhlmann@sam.math.ethz.ch}

\maketitle

\begin{abstract}
The truncated Floquet-Bloch transform can be used to characterise the spectral properties of finite periodic and aperiodic large systems of resonators. This paper aims to provide for the first time the mathematical foundations of this transform.
\end{abstract}

\bigskip

\noindent \textbf{Keywords.}  Finite periodic structures, bandgaps, essential spectrum convergence, defect eigenmode, Toeplitz theory, Su–Schrieffer–Heeger model, aperiodic structure, dislocated chain, topological defect.\par

\bigskip

\noindent \textbf{AMS Subject classifications.} 35J05,35C20,35P20.
\\

\section{Introduction}

The spectral properties of large finite systems of subwavelength resonators are a difficult and outstanding problem in wave physics and metamaterial science. Understanding the nature and formation of bandgaps and localised modes associated with wave propagation in finite periodic or aperiodic structures presents fundamental and technological challenges. In particular, basic questions are about the existence of bandgaps and the mechanism of localisation. 

To explore these issues, a \emph{capacitance matrix} formulation has been derived to approximate the eigenmodes and eigenfrequencies of systems of subwavelength resonators in \cite{ammari.davies.ea2021Functional,cbms}. Then a numerical method called \emph{truncated Floquet-Bloch transform} has been introduced in \cite{ammari.davies.ea2023Convergence,ammari.davies.ea2023Spectral}. The truncated Floquet-Bloch transform has been used to approximate the spectral properties of large finite subwavelength resonator systems, giving a concrete way to associate bandgaps and localised modes to practically realisable materials that may be finite periodic or aperiodic. 
The main idea is that when the size of the finite structure is sufficiently large, the structure's eigenmodes are approximately a linear combination of Bloch modes of the corresponding infinite structure. To compare the discrete eigenvalues of the finite structure to the continuous spectrum of the infinite periodic structure, one can reverse engineer the appropriate quasiperiodicities corresponding to these Bloch modes, taking into account the symmetries in the problem.  The discrete band function and defect mode calculations in  \cite{ammari.davies.ea2023Convergence,ammari.davies.ea2023Spectral} provide a notion of how an eigenmode of the finite problem is approximated either by Bloch modes (delocalised eigenmodes) or defect modes (localised eigenmodes, corresponding to defect eigenfrequencies inside a bandgap) of the infinite structure. Localised modes can be obtained by (appropriately) changing  the material parameters inside the resonators 
\cite{ammari.davies.ea2024Anderson} or introducing structural defects \cite{ammari.davies.ea2020Topologically}. 

In this work, our aim is to provide for the first time the mathematical foundations of the truncated Floquet-Bloch transform, which is a tool able to recover the band structure from a finite system. 
After reformulating the original definition introduced in \cite{ammari.barandun.ea2024Spectra} to make it more stable and more efficient, we first consider truncated periodic systems of resonators. We explore two cases of interest: sets of resonators repeated periodically with either only nearest neighbour interactions or with long-range interactions. Based on $k$-Toeplitz theory and the fact from \cite{ammari.davies.ea2023Spectral} that any eigenmode of the infinitely periodic structure has a corresponding approximation in the large, finite counterpart, we prove (in \cref{thm: TFT general}) that the truncated Floquet-Bloch transform can recover the limiting quasiperiodicity from the bulk eigenmodes of the finite system. In the nearest-neighbour and $k=1$ case, the proof turns out to be explicit (\cref{prop: DFT gives two diracts on tridiagonal toeplitz and 1D cap mat}). This result is then generalised to systems of resonators with long-range interactions by approximating the full capacitance matrix by a banded Toeplitz matrix (see again \cref{thm: TFT general}).

Subsequently, we apply the truncated Floquet-Bloch transform to three aperiodic structures: Su--Schrieffer--Heeger (SSH) finite chains of resonators, compactly perturbed finite periodic systems, and dislocated chains of resonators. SSH chains consist of systems of dimers with a defect in the geometric structure, such that at some point the repeating pattern of alternating separation distances is broken. While the SSH model is a canonical example of a topologically protected interface mode \cite{original_ssh,ssh3d,ammari.barandun.ea2024Exponentially,fefferman1}, \emph{i.e.}, robust with respect to imperfections in the design, defect eigenmodes induced by compact perturbations of the system are known to be very sensitive to imperfections in the design \cite{ammari.davies.ea2021Functional,cbms}. Dislocated chains of resonators are one-dimensional arrays of pairs of subwavelength resonators where we introduce a defect by adding a dislocation within one of the resonator pairs. 
This dislocation generates a topologically protected edge mode \cite{ammari.davies.ea2022Robust,hempel.kohlmann2011variational,drouot.fefferman.ea2020Defect}.

For each of these three aperiodic structures, we show that the truncated Floquet-Bloch transform recovers the band structure and the localised eigenmodes in the bandgap.   Our results are illustrated by several numerical experiments. 

The paper is organised as follows. In \cref{sec:2}, we outline the motivation and main results of this work. \cref{sec:3} is devoted to fixing the notation and collecting some useful results in Toeplitz theory, which are the building blocks of our theory. In \cref{sec: DFT}, we introduce the truncated Floquet-Bloch transform and show how it interacts with the eigenvectors and pseudo-eigenvectors of circulant matrices. In \cref{sec:5}, we apply the truncated Floquet-Bloch transform to finite, periodic systems and prove that it enables the accurate recovery of their associated band structures. \cref{sec:6} is devoted to SSH chains of resonators, compactly defected structures, and dislocated chains of resonators for which the truncated Floquet-Bloch transform recovers both the band structure and the localised eigenmodes in the bandgap accurately. The paper ends with concluding remarks and discussions on the possible generalisation of the proposed method for more general disordered systems of resonators than those discussed here.

\section{Motivation and main results} \label{sec:2}
The Floquet-Bloch theorem is one of the most used tools in the study of infinite systems. Considering a lattice $\Gamma\subset \R^d$ with fundamental domain $Y$ and a function $f(x)$ decaying sufficiently fast, then the Floquet-Bloch transform of $f$ is defined as
\begin{align}
    \label{def:floquet-bloch-transform}
    \mathcal{U}[f](x,\alpha)\coloneqq \sum_{m\in\Gamma} f(x-m)e^{\i \alpha\cdot m},  \quad x,\alpha \in \R^d.
\end{align}
The Floquet-Bloch transform $\mathcal{U}[f](x,\alpha)$ is periodic in $\alpha$ and $\alpha$-quasiperiodic in $x$, that is, $\mathcal{U}[f](x+m,\alpha)=e^{\i \alpha\cdot m}\mathcal{U}[f](x,\alpha)$ for $m\in\Gamma$. It is therefore enough to know the function $\mathcal{U}[f](x,\alpha)$ in the fundamental domain $Y$ in the $x$-variable  and in the \emph{first Brillouin zone} $Y^*\coloneqq \mathbb{R}^d/\Gamma^*$ in the $\alpha$-variable. Here, $\Gamma^*$ is the dual lattice of $\Gamma$. 

The Floquet-Bloch theorem states that if $L$ is a self-adjoint differential operator with coefficients that are periodic with respect to the lattice $\Gamma$, then its spectrum is decomposed as
\begin{align}
\label{eq: floquet-bloch-theorem}
    \sigma(L) = \bigcup_{\alpha \in Y^*} \sigma(L(\alpha)), 
\end{align}
where $L(\alpha)$ is the original operator restricted to act on functions defined on the torus $Y$. Remark that if $L$ is elliptic, the operators $L(\alpha)$ have compact resolvents and hence discrete spectra, say $\mu_1(\alpha)\leq\mu_2(\alpha)\leq \dots$ counted with their multiplicities. Then \eqref{eq: floquet-bloch-theorem} becomes \begin{align}
\label{eq: spectrum differntial operator band functions}
    \sigma(L) = \bigcup_{n\in\N} \bigcup_{\alpha \in Y^*} \mu_n(\alpha).
\end{align}

The study of infinite periodic systems is therefore widespread in the literature. Nevertheless, they are mostly nonphysical and just a mathematical approximation of large, finite systems. However, in this paper, we show that the band functions $\mu_n(\alpha)$ are very much physical and represent the limit of their discrete equivalent as the size of the finite system grows.

\begin{figure}[h]
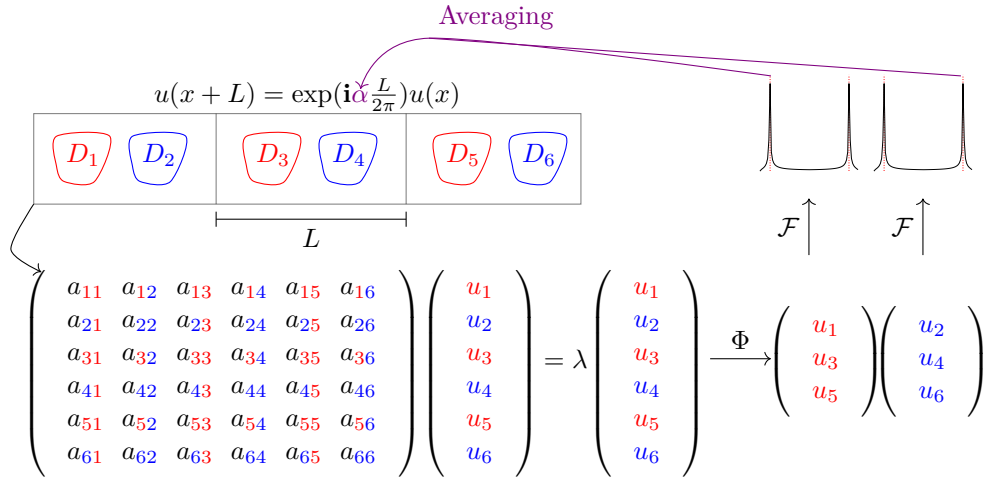

    \centering
    \include{plots/DFT.tex}
    \caption{Illustration of different steps behind the truncated Floquet-Bloch transform in reconstructing the quasiperiodicity of an eigenmode. The map $\mathcal{F}$ is the discrete  Fourier transform and $\Phi$ is defined in \eqref{eq: phi map slicing}.}
    \label{fig: illustrtion TFBT}
\end{figure}

We consider linear eigenvalue problems as the discrete equivalent of partial differential equation (PDE) problems. These discrete models are in particular used in practice in photonics and phononics \cite{ammari.davies.ea2021Functional,cbms} to approximate the resonant eigenfrequencies and eigenmodes of systems of high contrast, subwavelength resonators. We refer the reader to \cite{ammari.davies.ea2021Functional,cbms} for their derivations from continuous PDE  models. Analogous models appear in condensed matter models
under the tigh-binding approximation. 
We restrict our analysis to periodicity with respect to a one-dimensional lattice. 

Our approach is illustrated in \cref{fig: illustrtion TFBT}. The eigenmodes of a finite system of dimers can be modelled by a linear eigenvalue problem (we will see that the involved matrix has a Toeplitz structure due to the spatial periodicity). We regroup the entries of the eigenvector by the position in the unit cell. These vectors present a quasiperiodic behaviour (expected by the periodicity of the problem) whose quasiperiodicity can be extracted through the discrete Fourier transform (DFT). This quasiperiodicity is asymptotically the same as the one of the mode propagating through the infinite system with the same eigenfrequency.

The main result of this paper is \cref{thm: TFT general}, which shows that the procedure outlined in \cref{fig: illustrtion TFBT} recovers the quasiperiodicities of the eigenmodes associated to large, finite systems.

\section{Notation and Toeplitz theory} \label{sec:3}
\subsection{Toeplitz operators and matrices}
Let $k\in\N$ and consider a complex matrix-valued function  $f:\mathbb{T}^1\to \C^{k\times k}$ on the unit circle $\mathbb{T}^1\subset \C$ represented by its Fourier series
\begin{align}
    f(z)=\sum_{s\in\Z} a_s z^s,
\end{align}
where $a_s\in\C^{k\times k}$. 

The \emph{$k$-Toeplitz operator} associated to $f$ as above is the operator on $\ell^2(\N)$, the set of square summable sequences, given by 
$$
    T(f)=\begin{pmatrix}
         a_0     & a_{-1} & a_{-2}  & \cdots  \\[1mm]
            a_1     & a_0    & a_{-1} &   \cdots    \\[-1mm]
               a_2      & a_1 &  \ddots & \ddots  \\
               \vdots & \ddots & \ddots & \ddots 
    \end{pmatrix}.
$$

Let now $n, k\geq 1$ and define the projections
\begin{align}
    P_n:\ell^2(\N) & \to\ell^2(\N)\nonumber         \\
    (x_1,x_2,\dots)   & \mapsto(x_1,\dots x_n,0,0,\dots).
    \label{eq: definiton of Pn}
\end{align}
Then, the \emph{$k$-Toeplitz matrix} of order $mk$ for $m\in\N$ associated to the symbol $f$ is given by
$$
    T_{mk}(f)\coloneqq P_{mk}T(f)P_{mk}.
$$
$T_{m k}(f)$ can be identified as an $mk\times mk$ matrix.

\begin{remark}
    When $k=1$, $k$-Toeplitz matrices and operators are simply known as Toeplitz matrices and operators.
\end{remark}

\begin{definition}[Band functions]
\label{def: band functions}
    Consider a Hermitian Toeplitz operator $T(f)$ with continuous symbol $f:\TO\to \C^{k\times k}$. Then, its essential spectrum is given by $$\sigma_{ess}(T(f)) = \bigcup_{e^{\i\alpha}\in \TO}\sigma(f(e^{\i\alpha})).$$ This follows from \cite[Theorem 6.5]{bottcher.silbermann1999Introduction}). We can then continuously label the eigenvalues of $f(\alpha)$ to obtain the \emph{band functions} $\lambda_p:\TO \to \R$ for $p=1,\dots,k$. 
    
    For these bands, we also define the corresponding \emph{eigenvector band functions} $\bm u_p:\TO \to \C^k$ mapping $\expi{\alpha}$ to the eigenvector corresponding to the eigenvalue $\lambda_p(\expi{\alpha})$. We make $\bm u_p(\expi{\alpha})$ unique by requiring it be unit length and fixing a polarity, \emph{i.e.},  $\Im \bm u_p(\expi{\alpha})^{(0)} = 0$.
\end{definition}
Remark the similarity between \cref{def: band functions} and \eqref{eq: spectrum differntial operator band functions}.

\begin{assumption}\label{assumptions}
    For the entirety of the work, we make the following assumptions:
    \begin{enumerate}
        \item[(i)] We consider symbols with enough regularity, that is, $f:\mathbb{T}^1\to \C^{k\times k}$ continuous and piecewise differentiable, and whose image are Hermitian matrices, \emph{i.e.}, $f(e^{\i\alpha})_{i,j}=\overline{f(e^{\i\alpha})_{j,i}}$. 
        \item[(ii)] Eigenvectors are considered to be $\ell^2$-normalised and the eigenvalues are sorted so that $\lambda_i\leq \lambda_j$ for $i<j$.
        \item[(iii)] No band crossings occur, that is the images of the band functions $\lambda_p:\TO\to\R$ are disjoint for all $p$.
    \item[(iv)] $\lambda_p:\TO\to\R$ are continuously differentiable away from $0$ for all $p$.
    \item[(v)] There are no van der Hove points, that is, $\lambda'_p(e^{\i\alpha})\neq 0$ for any $\expi{\alpha}\in\TO\setminus\{\pm 1\}$.
    \end{enumerate}
    We will denote the set symbols satisfying these assumptions as $\setsymbols$.
\end{assumption}

Note that this implies $T(f)$ to be Hermitian and we have the symmetry property $\lambda_p(e^{\i\alpha}) = \lambda_p(e^{-\i\alpha})$.

We will now introduce two special kinds of Toeplitz matrices.
\begin{definition}[Banded matrices]
We call a Toeplitz operator $r$-banded if it has only $r$ nonzero off-diagonal blocks. Note that the symbol of an  $r$-banded Toeplitz operator is therefore a Laurent polynomial:
\begin{align*}
    f(z)=\sum_{\abs{s}\leq r} a_s z^s.
\end{align*}
\end{definition}

\begin{definition}[$r$-banded approximation]
For any symbol $f(z) = \sum_{s\in\Z}a_sz^s\in C_{k\times k}$ we can thus define the $r$-\emph{banded approximation} as 
\[
    f^{[r]}(z) \coloneqq \sum_{\abs{s}\leq r} a_s z^s 
\]
and call $T(f^{[r]})$ the $r$-\emph{banded approximation} of $T(f)$.
\end{definition}

We note that for $r$ large enough $f^{[r]}$ inherits \cref{assumptions} from $f$.

\begin{definition}[Circulant matrix]
    A $k$-circulant matrix of size $mk\times mk$ is a $k$-Toeplitz matrix such that its blocks satisfy
    $$
    a_{-j} = a_{m-j}, \quad j=1\dots m-1.
    $$
\end{definition}

A well-known result is that the eigenvectors of a circulant matrix are the Fourier modes.
\begin{lemma}\label{lemma: eigs of 1-circulant}
    Let $C_m(f)$ be a $1$-circulant matrix of size $m$ and $\lambda(\expi{\alpha})$ its only band. The eigenvalues of $C_m(f)$ are $\lambda(\expi{\alpha_j})$ and the eigenvectors are given by
    \begin{align*}
        \bm \omega_{j,m} \coloneqq \frac{1}{\sqrt{m}}(1,e^{\i\alpha_j},\dots,e^{\i\alpha_j(m-1)})^\top
    \end{align*}
    where $\alpha_j \coloneqq 2\pi\frac{j}{m}$ for $0\leq j\leq m-1$ and the superscript $\top$ denotes the transpose.
\end{lemma}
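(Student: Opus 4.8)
The statement is the classical fact that the (normalised) discrete Fourier basis diagonalises any circulant matrix; the only mild twist here is to match the resulting diagonal entries with the values of the single band $\lambda$. The plan is a direct verification, followed by a dimension count.

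First I would unfold the circulant structure. The $(p,q)$-entry of $C_m(f)$ is $a_{p-q}$, and the defining relations $a_{-j}=a_{m-j}$ for $j=1,\dots,m-1$ — i.e. $a_s=a_{s+m}$ for the finitely many blocks actually occurring — say exactly that this entry depends only on $(p-q)\bmod m$. Since $\expi{\alpha_j s}$ is likewise $m$-periodic in $s$ (because $\alpha_j m\in 2\pi\Z$), both quantities entering a matrix--vector product against $\bm\omega_{j,m}$ are $m$-periodic. Then I would compute $C_m(f)\bm\omega_{j,m}$ componentwise: writing the $p$-th component as $\tfrac{1}{\sqrt m}\sum_{q=1}^{m} a_{p-q}\expi{\alpha_j(q-1)}$ and substituting $t=p-q$, the index $t$ runs over a complete residue system modulo $m$, so by the periodicity just noted the sum may be taken over any fixed period; pulling out the factor $\expi{\alpha_j(p-1)}$ then yields $(C_m(f)\bm\omega_{j,m})_p=\mu_j\,(\bm\omega_{j,m})_p$ with $\mu_j$ independent of $p$. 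Hence $\bm\omega_{j,m}$ is an eigenvector. It remains to identify the scalar $\mu_j$ — a finite exponential sum of the blocks $a_s$ against the $m$-th root of unity $\expi{\alpha_j}$ — with the value $f(\expi{\alpha_j})=\lambda(\expi{\alpha_j})$ of the symbol (equivalently, of the single band), which is again a matter of folding the sum using $a_{-j}=a_{m-j}$ and $\expi{m\alpha_j}=1$ (and, if a sign needs matching, the symmetry $\lambda(\expi{\alpha})=\lambda(\expi{-\alpha})$, which is harmless since $\{\alpha_j\}$ is symmetric modulo $2\pi$).

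Finally I would invoke that $\bm\omega_{0,m},\dots,\bm\omega_{m-1,m}$ are orthonormal in $\C^m$ — they are the columns of the standard DFT matrix up to normalisation — hence form a complete eigenbasis of $C_m(f)$; therefore $\{\lambda(\expi{\alpha_j})\}_{j=0}^{m-1}$ is precisely the list of eigenvalues, counted with multiplicity. The whole argument is elementary, and the only place demanding any care is the index arithmetic in the identification $\mu_j=\lambda(\expi{\alpha_j})$.
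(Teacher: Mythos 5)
The paper states this lemma without proof, invoking it as a ``well-known result''; your direct verification is precisely the standard argument one would supply. Your computation is sound: the circulant condition makes $a_{p-q}$ depend only on $p-q$ modulo $m$, the substitution $t=p-q$ and the $m$-periodicity of $t\mapsto a_t e^{-\i\alpha_j t}$ let you pull out $e^{\i\alpha_j(p-1)}$, and orthonormality of $\{\bm\omega_{j,m}\}_{j=0}^{m-1}$ closes the argument. The one point worth keeping in sharp focus — which you did flag — is the identification of the scalar $\mu_j=\sum_{t=0}^{m-1}a_{-t}e^{\i\alpha_j t}$ with $\lambda(\expi{\alpha_j})=f(\expi{\alpha_j})$: after folding with $a_{-j}=a_{m-j}$ and $e^{\i m\alpha_j}=1$, you obtain $f(e^{-\i\alpha_j})$ rather than $f(e^{\i\alpha_j})$, and equality of the two genuinely requires the Hermitian hypothesis of Assumption 3.4(i) (so that $f$ is real-valued for $k=1$ and $f(e^{-\i\alpha})=\overline{f(e^{\i\alpha})}=f(e^{\i\alpha})$). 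Stating that reliance explicitly, rather than leaving it as ``if a sign needs matching'', would make the write-up airtight.
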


As we can see the spectrum of $C_m(f)$ is given by applying the band function to a uniform sampling of quasiperiodicities $\alpha_j\in [0,2\pi)$. 

To better capture the symmetry $\lambda(\expi{\alpha})=\lambda(e^{-\i\alpha})$ symmetry we can instead consider $\alpha_j\in [-\pi,\pi)$ by shifting the index by $\lfloor\frac{m}{2}\rfloor$ and define the following set of quasiperiodicity samples.
\begin{definition}[Discrete Brillouin zone]
    For $m\in \N$, we define the \emph{discretised Brillouin zone} as
    \begin{equation}
        Y^*_m \coloneqq\left\{2\pi\frac{j}{m}\,\middle|\, j=-\left\lfloor\frac{m}{2}\right\rfloor,\dots,m-1-\left\lfloor\frac{m}{2}\right\rfloor\right\}.
    \end{equation}
\end{definition}

Using this definition we can write $\sigma(C_m(f)) = \lambda(\expi{Y^*_m})$ and have the symmetry $\lambda(\expi{\alpha_j}) = \lambda(\expi{\alpha_{-j}})$. We further note that although $C_m(f)$ has coinciding eigenvalues, it remains diagonalisable, as the eigenvectors $\bm \omega_{j,m}$ form an orthonormal basis.

 To extend \cref{lemma: eigs of 1-circulant} to the block Toeplitz case, we define the quasiperiodic extension as follows.
\begin{definition}[Quasiperiodic extension]\label{def: quasiperiodic extension}
    Fix some $e^{\i\alpha}\in\mathbb{T}^1$ and consider a vector $\bm u \in \C^k$. Then, the $\alpha$-quasiperiodic extension of $\bm u$ of size $mk\in\N$ is given by
    \begin{align*}
    QP_m(\bm u,e^{\i\alpha})\coloneqq\frac{1}{\sqrt{m}}(\bm u, e^{\i\alpha}\bm u,\dots, e^{\i\alpha(m-1)}\bm u) \in \C^{mk}.
    \end{align*}
\end{definition}

\begin{proposition}\label{prop:eigs of k-circulant}
    Let $C_{mk}(f)$ be a $k$-circulant matrix with bands $\lambda_1,\dots \lambda_k$. For any $\alpha_j \in Y^*_m$ and $p\in \{ 1,\dots, k\}$, the circulant matrix has the eigenvalue $\lambda_p(\expi{\alpha_j})$ with corresponding eigenvector
    \[
        QP_m(\bm u_p(e^{\i\alpha_j}),e^{\i\alpha_j}).
    \]
    As in the $1$-circulant case, these eigenvectors form a complete orthonormal basis.
\end{proposition}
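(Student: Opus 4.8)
The plan is to exploit the block-circulant structure of $C_{mk}(f)$: a block Fourier transform decouples it into the $m$ Hermitian $k\times k$ blocks $f(\expi{\alpha_j})$, $\alpha_j\in Y^*_m$, after which it only remains to diagonalise each block by its eigenvalue and eigenvector band functions from \cref{def: band functions}.

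Concretely, I would first verify the eigenrelation directly. Using \cref{def: quasiperiodic extension} and the Fourier modes of \cref{lemma: eigs of 1-circulant}, write $QP_m(\bm u_p(\expi{\alpha_j}),\expi{\alpha_j})=\bm\omega_{j,m}\otimes\bm u_p(\expi{\alpha_j})$. A $k$-circulant matrix can be written $C_{mk}(f)=\sum_{r=0}^{m-1}M_r\otimes a_r$, where $M_r$ is the $m\times m$ cyclic shift by $r$ and the defining identity $a_{-j}=a_{m-j}$ is exactly what folds every block of the Toeplitz array into $a_0,\dots,a_{m-1}$. Since $M_r\bm\omega_{j,m}=e^{-\i\alpha_j r}\bm\omega_{j,m}$, applying $C_{mk}(f)$ gives $\bm\omega_{j,m}\otimes\bigl(\sum_{r} a_r e^{-\i\alpha_j r}\bigr)\bm u_p(\expi{\alpha_j})=\bm\omega_{j,m}\otimes f(e^{-\i\alpha_j})\bm u_p(\expi{\alpha_j})$. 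Invoking the symmetry $\lambda_p(\expi{\alpha})=\lambda_p(e^{-\i\alpha})$ together with the invariance of $Y^*_m$ under $\alpha_j\mapsto-\alpha_j$ (which lets one relabel without changing the family), this equals $\lambda_p(\expi{\alpha_j})\,QP_m(\bm u_p(\expi{\alpha_j}),\expi{\alpha_j})$. The same computation can be phrased structurally: conjugating $C_{mk}(f)$ by the unitary built from the $\bm\omega_{j,m}$ tensored with $I_k$ yields $\bigoplus_{\alpha_j\in Y^*_m} f(\expi{\alpha_j})$, whose eigenpairs are $(\lambda_p(\expi{\alpha_j}),\bm u_p(\expi{\alpha_j}))$ by \cref{def: band functions}.

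It then remains to check orthonormality, after which completeness follows by counting the $mk$ vectors against $\dim\C^{mk}=mk$. From the tensor structure, $\langle\bm\omega_{j,m}\otimes\bm u,\bm\omega_{l,m}\otimes\bm v\rangle=\langle\bm\omega_{j,m},\bm\omega_{l,m}\rangle\langle\bm u,\bm v\rangle=\delta_{jl}\langle\bm u,\bm v\rangle$ since the $\bm\omega_{j,m}$ are orthonormal (\cref{lemma: eigs of 1-circulant}); hence eigenvectors attached to different $\alpha_j$ are orthogonal. For a fixed $\alpha_j$, the vectors $\bm u_1(\expi{\alpha_j}),\dots,\bm u_k(\expi{\alpha_j})$ are unit eigenvectors of the Hermitian matrix $f(\expi{\alpha_j})$ for the pairwise distinct eigenvalues $\lambda_1(\expi{\alpha_j}),\dots,\lambda_k(\expi{\alpha_j})$ — distinctness being the no-band-crossing hypothesis of \cref{assumptions} — so they too are orthonormal.

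The one piece that needs care is the circulant wrap-around in the block computation: one must check that the windowed block rows of $C_{mk}(f)$, once weighted by the quasiperiodic phases in $QP_m$, reassemble the full symbol independently of the block row, which is precisely where $a_{-j}=a_{m-j}$ and $e^{\i\alpha_j m}=1$ are used; and one must keep the $\expi{\alpha}$ versus $e^{-\i\alpha}$ convention straight, which is ultimately harmless thanks to the $\alpha\mapsto-\alpha$ symmetry. Everything else is routine linear algebra.
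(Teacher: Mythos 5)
The paper states \cref{prop:eigs of k-circulant} without proof, treating it as a standard fact about block circulants, so there is no paper proof to compare against; your proposal is the natural argument and is essentially the right one (block Fourier diagonalisation, tensor decomposition, orthonormality plus a dimension count). However, the step where you \enquote{invoke the symmetry} has a genuine gap.

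Your computation correctly gives
\[
    C_{mk}(f)\bigl(\bm\omega_{j,m}\otimes\bm u\bigr) = \bm\omega_{j,m}\otimes f(e^{-\i\alpha_j})\bm u .
\]
The issue is that the band-function symmetry $\lambda_p(\expi{\alpha})=\lambda_p(e^{-\i\alpha})$ equates the \emph{eigenvalues} of $f(\expi{\alpha_j})$ and $f(e^{-\i\alpha_j})$ but does not make these two matrices equal, and in particular $\bm u_p(\expi{\alpha_j})$ is an eigenvector of $f(\expi{\alpha_j})$ and need not be one of $f(e^{-\i\alpha_j})$. What the computation actually yields is
\[
    C_{mk}(f)\,QP_m\!\bigl(\bm u_p(e^{-\i\alpha_j}),\expi{\alpha_j}\bigr) = \lambda_p(e^{-\i\alpha_j})\,QP_m\!\bigl(\bm u_p(e^{-\i\alpha_j}),\expi{\alpha_j}\bigr) = \lambda_p(\expi{\alpha_j})\,QP_m\!\bigl(\bm u_p(e^{-\i\alpha_j}),\expi{\alpha_j}\bigr),
\]
\emph{i.e.}, the quasiperiodic extension of $\bm u_p(e^{-\i\alpha_j})$, not of $\bm u_p(\expi{\alpha_j})$. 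Relabelling $\alpha_j\mapsto-\alpha_j$ in $Y^*_m$ moves the sign onto the phase argument of $QP_m$ instead of fixing it, so the symmetry argument alone does not close the gap. Under the assumption \cref{assumptions} that $f$ takes Hermitian values, one only gets $f(e^{-\i\alpha})=f(\expi{\alpha})^T$ in the real-coefficient case, and even then $\bm u_p(e^{-\i\alpha})=\overline{\bm u_p(\expi{\alpha})}$ rather than $\bm u_p(\expi{\alpha})$. So either the eigenvector should be written as $QP_m(\bm u_p(e^{-\i\alpha_j}),\expi{\alpha_j})$, or one has to adopt the opposite sign convention in the definition of $QP_m$ (or in the Toeplitz indexing $T(f)_{jk}=a_{j-k}$) to make the two factors of $QP_m$ carry the same sign of $\alpha_j$. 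The remainder of your argument — the identity $QP_m(\bm u,\expi{\alpha_j})=\bm\omega_{j,m}\otimes\bm u$, the decomposition $C_{mk}(f)=\sum_r M_r\otimes a_r$, the shift eigenrelation, the orthonormality from the tensor structure together with the no-band-crossing assumption, and completeness by counting — is correct and is exactly what the proposition is implicitly relying on.
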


\subsection{Delocalisation}
\begin{definition}[Delocalisation]
Let $(\bm u_m)_m$ be a sequence of unit vectors in $\ell^2(\N)$, then $(\bm u_m)_m$ is said to be \emph{delocalised} if $\lim_{m\to\infty}\Vert \bm u_m\Vert_\infty =0$.
\end{definition}
Localisation and delocalisation of vectors in the finite regime is much more tricky and we will avoid to introduce a formal definition. Nevertheless, we will use the term \emph{localised} informally to indicate vectors $\bm v$ for which $\Vert \bm v \Vert_\infty / \Vert \bm v\Vert_2$ is ``large'' and \emph{delocalised} for those with $\Vert \bm v \Vert_\infty / \Vert \bm v\Vert_2$ close to $0$.

The \tftn can only work with delocalised eigenvectors: this is clear from the $e^{\i\alpha \cdot m}$ factor in \eqref{def:floquet-bloch-transform}. We hence seek to show that the matrices of our interest present plenty of delocalised eigenvectors. A first, rather simple result, shows that we have eigenvectors converging to $0$ weakly, but this will not be sufficient.
\begin{lemma}\label{prop: weak conv of eigenvectors to essential spectrum}
    Let $T, T_m$ be self-adjoint, bounded linear operators on a Hilbert space $H$. Let $\lambda\in \sigma_{ess}(T)$ and $\lambda_m,\bm u_m$ be an eigenpair of $T_m$ with $\lambda_m\to\lambda$ and $\norm{\bm u_m}=1$. If $T_m\to T$ strongly,
    then $\bm u_m$ has a subsequence $\bm u_n$ converging weakly to 0.
\end{lemma}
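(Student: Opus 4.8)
The plan is to combine weak sequential compactness of bounded sets with the strong convergence $T_m\to T$ to show that every weak subsequential limit of $(\bm u_m)_m$ is an eigenvector of $T$ at the eigenvalue $\lambda$, and then to use $\lambda\in\sigma_{ess}(T)$ to force such a limit to vanish.

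First I would record the uniform bound: since $T_m\to T$ strongly, the Banach--Steinhaus theorem gives $M:=\sup_m\|T_m\|<\infty$, so $(\bm u_m)_m$ is bounded. As a bounded sequence in the separable Hilbert space $\overline{\operatorname{span}}\{\bm u_m:m\in\N\}$ (and hence in $H$), it has a weakly convergent subsequence, say $\bm u_m\rightharpoonup\bm u$ after relabelling; it then suffices to arrange $\bm u=0$.

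Next I would identify $\bm u$. Fixing $y\in H$ and using self-adjointness of $T$ and $T_m$,
\[
\langle\bm u_m,(T-\lambda)y\rangle=\langle\bm u_m,(T-T_m)y\rangle+\langle T_m\bm u_m,y\rangle-\lambda\langle\bm u_m,y\rangle=\langle\bm u_m,(T-T_m)y\rangle+(\lambda_m-\lambda)\langle\bm u_m,y\rangle .
\]
The first term tends to $0$ because $(T_m-T)y\to0$ and $\|\bm u_m\|=1$, and the second because $\lambda_m\to\lambda$ and $|\langle\bm u_m,y\rangle|\le\|y\|$. Passing to the limit gives $\langle\bm u,(T-\lambda)y\rangle=0$ for all $y\in H$, hence $(T-\lambda)\bm u=0$, i.e. $\bm u\in\ker(T-\lambda)$.

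The last step is to conclude $\bm u=0$, and this is the step I expect to carry the real weight. If $\lambda\notin\sigma_p(T)$ it is immediate, and since the preceding argument then applies verbatim to every subsequence we even obtain $\bm u_m\rightharpoonup0$ for the full sequence. If instead $\lambda$ is an eigenvalue of $T$, then since $\lambda\in\sigma_{ess}(T)$ the eigenspace $K:=\ker(T-\lambda)$ is infinite-dimensional, and one must exclude that the $\bm u_m$ concentrate in $K$: writing $P$ for the orthogonal projection onto $K$, weak continuity of $P$ gives $P\bm u_m\rightharpoonup P\bm u=\bm u$, so it is enough to pass to a further subsequence along which $\|P\bm u_m\|\to0$. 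This is where spectral information must enter; for the self-adjoint Toeplitz operators $T=T(f)$ to which the lemma is applied in this paper, $T$ has no eigenvalues embedded in its essential spectrum (its spectrum there being absolutely continuous), so $\sigma_p(T)\cap\sigma_{ess}(T)=\varnothing$, hence $K=\{0\}$ and the difficulty does not arise.
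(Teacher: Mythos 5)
Your proposal is correct, and its core follows the same route as the paper's proof: extract a weakly convergent subsequence from the bounded sequence $(\bm u_m)_m$, use the strong convergence $T_m\to T$ together with self-adjointness to show that the weak limit $\bm u$ satisfies $(T-\lambda)\bm u=0$, and then argue that $\bm u=0$. (The appeal to Banach--Steinhaus is superfluous here, since $\norm{\bm u_m}=1$ is assumed.) Where you genuinely depart from the paper is in the final step, and you are right that this is where the weight lies: the paper simply asserts that $\bm u\neq 0$ would make $\bm u$ an eigenvector of $T$, ``a contradiction because $\lambda$ lies in the essential spectrum.'' That inference is invalid for a general self-adjoint $T$, since points of $\sigma_{ess}(T)$ may well be eigenvalues (of infinite multiplicity, or embedded in continuous spectrum); indeed $T_m=T=\operatorname{Id}$ with $\bm u_m=\bm e_1$ falsifies the lemma as literally stated. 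Your observation that one additionally needs $\lambda\notin\sigma_p(T)$ --- and that this holds for the operators to which the lemma is actually applied, namely the Laurent operator $L(f)$ (whose level sets have measure zero under the no-van-Hove-points assumption) and the self-adjoint Toeplitz operator $T(f)$ (Rosenblum's absolute-continuity theorem) --- is exactly the right repair. So there is no gap in your argument; rather, you have identified and closed one in the paper's own proof, which would be cleanest to fix by adding the hypothesis $\lambda\notin\sigma_p(T)$ to the statement of the lemma.
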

\begin{proof}
    Because $\norm{\bm u_m}=1$ is bounded, we use the Banach-Alaoglu theorem (see \cite[Theorem 3.18]{brezisFunctionalAnalysisSobolev2011}) to find a weakly convergent subsequence $\bm u_n\to \bm u\in H$. From \cite[Proposition 3.13]{brezisFunctionalAnalysisSobolev2011}, we know that this implies 
    \[
    \ip{\bm u_n}{\bm x_n}\to \ip{\bm u}{\bm x} \quad \text{for all }\bm x_n\to \bm x \text{ strongly.}
    \]

    We now aim to show that $\bm u=0$. For any $\bm x\in H$, we have
    \[
    \ip{(T-\lambda_n)\bm u_n}{\bm x} = \ip{(T-T_n)\bm u_n}{\bm x} = \ip{\bm u_n}{(T-T_n)\bm x} \to 0,
    \] since $(T-T_n)\bm x\to 0$ strongly.
    At the same time, we have 
    \[
    \ip{(T-\lambda_n)\bm u_n}{\bm x} = \ip{\bm u_n}{(T-\lambda)\bm x+(\lambda-\lambda_n)\bm x} \to \ip{\bm u}{(T-\lambda)\bm x} = \ip{(T-\lambda)\bm u}{\bm x},
    \]
    because $(\lambda-\lambda_n)\bm x\to 0$ strongly.
    Putting the two equalities together, this implies that $\ip{(T-\lambda)\bm u}{\bm x} = 0$ for all $\bm x\in H$. Assuming $\bm u\neq 0$  would imply that $\bm u\in H$ is an eigenvector of $T$, a contradiction because $\lambda$ lies in the essential spectrum.
\end{proof}
\begin{remark}\label{rmk:not yet delocalized}
    While we would like to prove delocalisation, \emph{i.e.}, $\norm{\bm u_n}_\infty\to 0$, \cref{prop: weak conv of eigenvectors to essential spectrum} only ensures the weaker version: For all $j\in \N$, we have $\abs{\bm u_n^{(j)}}\to 0$ as $n\to\infty$. Unfortunately, this does not rule out \enquote{spiky} eigenvectors, where the spike keeps moving to the right as $n$ increases. It is well known, for example, that the sequences $(\bm e_n)_n$, where $e_n^{(j)}=\delta_{n,j}$, converges weakly but not strongly to $0$.
\end{remark}

To fix the issue pointed out in \cref{rmk:not yet delocalized} we first need an intermediate result. We use the following notation for the standard inclusion map $\ell^2(\N) \to \ell^2(\Z)$
\begin{align*}
    J: \ell^2(\N) &\to \ell^2(\Z)\\
    J(x)_j &= \begin{dcases}
        x_j, & j\geq 0,\\
        0, & j< 0,
    \end{dcases}
\end{align*}
and the projection map $\ell^2(\Z) \to \ell^2(\N)$
\begin{align*}
    P: \ell^2(\Z) &\to \ell^2(\N)\\
    (x_j)_{j\in\Z} &\mapsto (x_j)_{j\in\N}.
\end{align*}
\begin{lemma}\label{lem:strongly conv shift}
    Consider a symbol $f\in C_{k\times k}$ and denote by $s_q$ the unitary left-shift operator acting on $\ell^2(\Z)$ by shifting a sequence by $q$ entries to the left, that is,
    \begin{align}
        \label{eq: sm op}
        s_q:(x_j)_{j\in \Z}\mapsto (x_{j+q})_{j\in \Z}.
    \end{align} 
    Let $(\beta_m)_m\subset \Z$ be a sequence of indices. Then,  $s_{\beta_m}JT_{mk}(f)Ps_{\beta_m}^*$ is again a bounded, linear, self-adjoint operator and we have
    \[
        s_{\beta_m}JT_m(f)Ps_{\beta_m}^* \to L(f) \text{ strongly as } m\to\infty
    \]
    if and only if $\beta_m \in \{0,k,\dots mk\}$ and infinitely separated from $0$ and $mk$, \emph{i.e.}, $\abs{\beta_m} \to \infty$ and $\abs{mk-\beta_m}\to \infty$ as $m\to \infty$.
\end{lemma}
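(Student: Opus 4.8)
The plan is to describe $B_m := s_{\beta_m}JT_{mk}(f)Ps_{\beta_m}^*$ explicitly as a bi-infinite matrix in the standard basis $(\bm e_\ell)_{\ell\in\Z}$ of $\ell^2(\Z)$ and then read the convergence off that matrix (I use the indexing $\ell^2(\N)=\ell^2(\{0,1,2,\dots\})$ of the map $J$). The preliminary assertions are immediate: $P=J^*$ (the inclusion $\ell^2(\N)\hookrightarrow\ell^2(\Z)$ has adjoint the coordinate restriction) and $T_{mk}(f)=P_{mk}T(f)P_{mk}$ is Hermitian by \cref{assumptions}(i), so $JT_{mk}(f)P=JT_{mk}(f)J^*$ is bounded, linear and self-adjoint, and conjugating by the unitary $s_{\beta_m}$ preserves these properties; moreover $\norm{B_m}=\norm{T_{mk}(f)}\le\norm{T(f)}$, so the $B_m$ are uniformly bounded, as is the Laurent operator $L(f)$ on $\ell^2(\Z)$ — the bi-infinite $k\times k$-block-Toeplitz matrix $(a_{i-j})_{i,j\in\Z}$, of norm $\max_{z\in\TO}\norm{f(z)}<\infty$. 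Since finitely supported vectors are dense in $\ell^2(\Z)$ and the $B_m$ and $L(f)$ are bounded uniformly in $m$, for the \enquote{if} direction it suffices to prove $B_m\bm e_\ell\to L(f)\bm e_\ell$ for each fixed $\ell\in\Z$.

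The computation at the heart of the argument is that, with $W_m:=\{\,i\in\Z : 0\le i+\beta_m\le mk-1\,\}$,
\[
    (B_m)_{ij}=(T_{mk}(f))_{i+\beta_m,\,j+\beta_m}\ \text{ for } i,j\in W_m, \qquad\text{and}\qquad (B_m)_{ij}=0 \ \text{ otherwise;}
\]
that is, $B_m$ is the $W_m\times W_m$ diagonal window of $T_{mk}(f)$, translated along the main diagonal so that its window sits around the origin. Now $T_{mk}(f)$ agrees with $L(f)$ on $\{0,\dots,mk-1\}^2$, and $L(f)$ is invariant under diagonal translation by multiples of $k$; hence, \emph{when $\beta_m\in k\Z$}, $(B_m)_{ij}=(L(f))_{i+\beta_m,j+\beta_m}=(L(f))_{ij}$ for all $i,j\in W_m$. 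Thus, for $\beta_m\in k\Z$ and $m$ large enough that $\ell\in W_m$ (which holds eventually, since $\beta_m\to\infty$ and $mk-\beta_m\to\infty$), the vector $B_m\bm e_\ell$ is precisely $L(f)\bm e_\ell$ with all coordinates outside $W_m$ set to zero. As $L(f)\bm e_\ell\in\ell^2(\Z)$ (the symbol is continuous, so $\sum_s\norm{a_s}^2<\infty$ by Parseval) and $W_m=\{\,i : -\beta_m\le i\le mk-1-\beta_m\,\}$ eventually contains any fixed $\{-N,\dots,N\}$, we conclude $\norm{B_m\bm e_\ell-L(f)\bm e_\ell}^2=\sum_{i\notin W_m}\abs{(L(f)\bm e_\ell)_i}^2\to0$. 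This proves the \enquote{if} direction.

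For the converse I argue by contraposition, probing with the vectors $\bm e_\ell$. If $\beta_m\to\infty$ fails, some subsequence $(\beta_{m_j})_j$ is bounded above by an $M\in\Z$, so $W_{m_j}\subseteq\{\,i : i\ge-M\,\}$; picking $\ell<-M$ with $L(f)\bm e_\ell\neq0$ — possible unless $f\equiv0$, which \cref{assumptions}(v) excludes, since the columns of $L(f)$ at indices congruent modulo $k$ are mutual translates and hence the nonzero ones occur arbitrarily far to the left — gives $B_{m_j}\bm e_\ell=0\not\to L(f)\bm e_\ell$. The failure of $mk-\beta_m\to\infty$ is handled symmetrically using the right edge of the window and a sufficiently large $\ell$; these two conditions together force $0<\beta_m<mk$ for all large $m$. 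Finally, if $\beta_m$ is not eventually a multiple of $k$, pass to a subsequence on which $\beta_{m_j}\equiv t\pmod k$ for a fixed $t\in\{1,\dots,k-1\}$; the same computation gives $B_{m_j}\to s_tL(f)s_t^*$ strongly, so strong convergence to $L(f)$ would force $L(f)$ to commute with $s_t$, hence with $s_d$ where $d:=\gcd(t,k)<k$, \emph{i.e.}, $L(f)$ would be $d$-block-Toeplitz, so $L(f)=L(g)$ for a continuous Hermitian symbol $g:\TO\to\C^{d\times d}$. Then $T(f)=T(g)$ on $\ell^2(\N)$, so
\[
    \bigcup_{p=1}^{k}\lambda_p(\TO)\;=\;\sigma_{ess}(T(f))\;=\;\bigcup_{i=1}^{d}\nu_i(\TO),
\]
where $\nu_1\le\dots\le\nu_d$ are the ordered eigenvalues of $g$; but the left-hand set has exactly $k$ connected components, since the intervals $\lambda_p(\TO)$ are pairwise disjoint (\cref{assumptions}(iii)) and nondegenerate (\cref{assumptions}(v)), whereas the right-hand set is a union of only $d<k$ compact intervals — a contradiction.

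The only genuinely delicate point I anticipate is this last step of the converse: once one sees that conjugation by $s_{\beta_m}$ simply slides the $W_m\times W_m$ window of $T_{mk}(f)$ along the diagonal, the \enquote{if} direction reduces to the $\ell^2$-tail estimate above and most of the \enquote{only if} direction is elementary bookkeeping about how that window can fail to exhaust $\Z$; excluding a shorter block period of $L(f)$ is the one spot where the non-degeneracy of the band structure, rather than pure Toeplitz combinatorics, genuinely enters.
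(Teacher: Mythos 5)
Your ``if'' direction is the same idea as the paper's (which is only a sketch): write $B_m := s_{\beta_m}JT_{mk}(f)Ps_{\beta_m}^*$ as the $W_m\times W_m$ compression of $L(f)$, observe that a shift by a multiple of $k$ leaves $L(f)$ invariant, and pass to strong convergence via density of finitely supported vectors plus uniform boundedness --- essentially the \cref{lemma: norm of toeplitz sections converges}--\cref{lemma: strong convergence of sections} argument the paper gestures at. Where you go beyond the paper is the ``only if'' direction, which the paper does not actually prove at all. Your treatment of the window-escape cases (bounded $\beta_m$, bounded $mk-\beta_m$) is correct, and the reduction of the residue case to ``$L(f)$ commutes with $s_t$, hence with $s_{\gcd(t,k)}$'' is a nice observation. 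The final contradiction via the component count of $\sigma_{ess}(T(f))$ correctly uses Assumptions~(iii) and (v) to get exactly $k$ nondegenerate, disjoint band intervals.

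One step is asserted rather than proved: that when $L(f)$ is $d$-block-Laurent (you write ``Toeplitz'', but you mean Laurent on $\ell^2(\Z)$), the $d\times d$ block symbol $g$ is \emph{continuous}, so that $\sigma_{ess}(T(g))=\bigcup_{i=1}^d\nu_i(\TO)$ is a union of at most $d$ compact intervals. This is true but requires the polyphase relation between $f$ and $g$: writing $k=cd$, one finds $g(w)=\sum_{\mu=0}^{c-1}w^{\mu}h_\mu(w^c)$ where each $h_\mu$ is a $d\times d$ sub-block of the continuous $f$; since $w\mapsto w^c$ is continuous $\TO\to\TO$, $g$ is continuous (and Hermitian-valued because $L(g)=L(f)$ is self-adjoint). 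Without this, one only knows $g\in L^\infty$, and the essential range of $\nu_i$ need not be an interval, so the component count argument would fail. With that justification supplied, your proof is complete and, on the converse, substantially more rigorous than what the paper records.
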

\begin{proof}
    The assumption of $\beta_m$ being between $0$ and $m$ and infinitely separated ensures that $s_{\beta_m}JT_m(f)Ps_{\beta_m}^*$ accumulates arbitrarily many nonzero entries to both sides of the origin. Where they are nonzero, these entries match the entries of the Laurent operator because the $k$-Laurent operator is invariant under shifts by multiples of $k$. We can then repeat the arguments of \cref{lemma: norm of toeplitz sections converges} and \cref{lemma: strong convergence of sections}.
\end{proof}
We can now show delocalisation for the eigenvectors of Toeplitz matrices.
\begin{proposition}\label{prop:final result about delocalisation}
    Consider a symbol $f\in C_{k\times k}$. Let $\lambda\in \sigma_{ess}(T(f))$ and $(\lambda_m,\bm u_m)$ eigenpairs of $T_{mk}(f)$ with $\lambda_m\to\lambda$ and $\norm{\bm u_m}=1$. 
    Then, there exists a subsequence $\bm u_n$ converging strongly to $0$ with respect to $\norm{\cdot}_\infty$.
\end{proposition}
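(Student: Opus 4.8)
The plan is to combine the weak convergence from \cref{prop: weak conv of eigenvectors to essential spectrum} with the strong-convergence-under-shifts result of \cref{lem:strongly conv shift} to upgrade weak convergence to $\norm{\cdot}_\infty$-convergence along a subsequence. The obstruction identified in \cref{rmk:not yet delocalized} is precisely a ``spike'' $|\bm u_n^{(j_n)}|$ that does not decay while $j_n\to\infty$; the idea is to re-center the operator at the location of that spike and derive a contradiction with the fact that $\lambda$ is in the essential spectrum.

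First I would argue by contradiction. Suppose no subsequence converges to $0$ in $\norm{\cdot}_\infty$; then $\limsup_m \norm{\bm u_m}_\infty = c > 0$, so after passing to a subsequence (still indexed by $m$) there are indices $j_m$ with $|\bm u_m^{(j_m)}| \geq c/2$. By \cref{prop: weak conv of eigenvectors to essential spectrum}, a further subsequence converges weakly to $0$, which forces $j_m \to \infty$ (otherwise the spike would sit at a fixed coordinate, contradicting $|\bm u_m^{(j_m)}|\not\to 0$). I would also need $mk - j_m \to \infty$: this should follow by the same token applied to the ``reflected'' matrix, or can be handled symmetrically, since $T_{mk}(f)$ is (up to the block structure) symmetric under reversing the order of blocks. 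To make the shift land on a block boundary, replace $j_m$ by $\beta_m \coloneqq k\lfloor j_m/k\rfloor$, which stays within distance $k$ of the spike, still satisfies $\beta_m\to\infty$ and $mk-\beta_m\to\infty$, and lies in $\{0,k,\dots,mk\}$.

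Now set $\bm v_m \coloneqq s_{\beta_m} J \bm u_m \in \ell^2(\Z)$, a sequence of unit vectors, and let $\widetilde T_m \coloneqq s_{\beta_m} J T_{mk}(f) P s_{\beta_m}^*$. By \cref{lem:strongly conv shift}, $\widetilde T_m \to L(f)$ strongly, where $L(f)$ is the $k$-Laurent operator on $\ell^2(\Z)$. Moreover $\widetilde T_m \bm v_m = \lambda_m \bm v_m$ (since $J$ and $P s_{\beta_m}^*$ compose on $\operatorname{ran} J$ to the identity in the relevant range of indices — one should check the boundary terms vanish in the limit, which is where $\beta_m$ being infinitely separated from both ends is used), so $(\lambda_m, \bm v_m)$ is an approximate eigenpair of $L(f)$. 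By Banach–Alaoglu, pass to a subsequence with $\bm v_m \rightharpoonup \bm v \in \ell^2(\Z)$ weakly; running the argument of \cref{prop: weak conv of eigenvectors to essential spectrum} verbatim (now with $T = L(f)$, which satisfies $\sigma(L(f)) = \sigma_{ess}(T(f))$ and has no eigenvalues since its spectrum is purely essential) shows $(L(f) - \lambda)\bm v = 0$, hence $\bm v = 0$.

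The contradiction comes from the spike: $|\bm v_m^{(0)}| = |\bm u_m^{(\beta_m)}|$, and since $|\beta_m - j_m| < k$, at least one coordinate $\bm v_m^{(\ell)}$ with $|\ell| < k$ satisfies $|\bm v_m^{(\ell)}| \geq c/2$. Passing to a further subsequence, we may fix $\ell$; then $\bm v_m^{(\ell)} \to \bm v^{(\ell)} = 0$ (weak convergence implies convergence of each fixed coordinate), contradicting $|\bm v_m^{(\ell)}| \geq c/2 > 0$. Hence the assumption fails and a subsequence of $\bm u_m$ converges to $0$ in $\norm{\cdot}_\infty$. I expect the main obstacle to be the bookkeeping around the boundary: verifying that $s_{\beta_m} J T_{mk}(f) P s_{\beta_m}^* \bm v_m$ really equals $\lambda_m \bm v_m$ up to an error tending to $0$, i.e.\ that truncating to $\{1,\dots,mk\}$ and then shifting does not corrupt the eigenvalue equation near the moving window — this is exactly what the ``infinitely separated from $0$ and $mk$'' hypothesis in \cref{lem:strongly conv shift} is designed to control, and it needs to be invoked carefully rather than waved through.
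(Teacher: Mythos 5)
Your proposal is correct and takes essentially the same approach as the paper's proof: both identify the location $\beta_m$ of the dominant entry (rounded to a block boundary), use \cref{lem:strongly conv shift} to recenter the truncation and obtain strong convergence to the Laurent operator $L(f)$, invoke $\sigma_{ess}(T(f)) = \sigma_{ess}(L(f))$, and then apply the weak-convergence argument of \cref{prop: weak conv of eigenvectors to essential spectrum} to kill the recentered spike. The only cosmetic difference is that you frame the case distinction as a contradiction, whereas the paper enumerates the three cases ($\beta_m$ bounded, $mk-\beta_m$ bounded, both unbounded) directly and extracts the subsequence; also, the concern you flag about boundary error terms in $\widetilde T_m \bm v_m = \lambda_m \bm v_m$ is unfounded, since $s_{\beta_m}^*s_{\beta_m} = \mathrm{Id}$ and $PJ = \mathrm{Id}$ make the identity exact with no error to control.
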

\begin{proof}
    For now we consider the non-block case, \emph{i.e.}, $k=1$.
    We denote $\beta_m=\argmax_{m\in \Z} \abs{\bm u_m}$ and note that $\beta_m\in \{0,\dots,m\}$ because $u_m$ is an eigenvector of $T_m(f)$.
    After possibly taking subsequences, we can now distinguish three cases: Either $\abs{\beta_m}$ is bounded, $\abs{m-\beta_m}$ is bounded or both are unbounded.
    The first case is handled by \cref{rmk:not yet delocalized} and the second one can be handled similarly.
    We shall now focus on the last case and embed $T_m(f)$ and $\bm u_m$ into $\ell^2(\Z)$ with $J$. Then we can  apply \cref{lem:strongly conv shift} to find that $s_{\beta_m}JT_m(f)P s_{\beta_m}^* \to L(f)$ strongly. $s_{\beta_m}\bm u_m$ is now a sequence of eigenvectors of $s_{\beta_m}JT_m(f)P s_{\beta_m}^*$ with $\lambda\in \sigma_{ess}(T(f)) = \sigma_{ess}(L(f))$ see \cite{gohberg.kaashoek.ea1993Classes}. We apply \cref{prop: weak conv of eigenvectors to essential spectrum} to find a subsequence $s_{\beta_n}\bm u_n$ converging weakly to $0$. Finally, because $\beta_n$ is the index of the maximal entry of $\bm u_n$, we have that 
    \[
    \norm{\bm u_n}_\infty = \norm{s_{\beta_n}\bm u_n}_\infty = (s_{\beta_n}\bm u_n)^{(0)} = \ip{s_{\beta_n}\bm u_n}{\bm e_0} \to 0,
    \]
    where we have used the fact that $s_{\beta_n}\bm u_n\to 0$ weakly.
    
    The proof in the block case, $k> 1$, works analogously by choosing $\beta_m$ to be the first index of the block of $\bm u_m$ containing its maximal entry, \emph{i.e.}, $\beta_m = \lfloor\argmax_{m\in \Z} \abs{\bm u_m}/k\rfloor k$. We then get $\ip{s_{\beta_n}\bm u_n}{\bm e_p} \to 0$ for $p=0,\dots k-1$, yielding the desired result.
\end{proof}

While \cref{prop:final result about delocalisation} only proves the existence of a delocalised subsequence we can immediately use it to prove that any sequence of eigenvectors with eigenvalues converging to the essential spectrum must be delocalised.
\begin{corollary}\label{cor:general delocalization}
    Consider a symbol $f\in C_{k\times k}$. Let $\lambda\in \sigma_{ess}(T(f))$ and a sequence $(\lambda_m,\bm u_m)$ of eigenpairs of $T_{mk}(f)$ with $\lambda_m\to\lambda$ and $\norm{\bm u_m}=1$. 
    Then $\bm u_m$ must be delocalised.
\end{corollary}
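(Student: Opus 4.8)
The plan is to deduce the full-sequence statement from the subsequential one in \cref{prop:final result about delocalisation} by the standard ``every subsequence has a further subsequence converging to the same limit'' argument. Concretely, I would argue by contradiction: suppose $\bm u_m$ is not delocalised, so $\norm{\bm u_m}_\infty \not\to 0$. Then there exist $\epsilon>0$ and a subsequence $(\bm u_{m_j})_j$ with $\norm{\bm u_{m_j}}_\infty \geq \epsilon$ for all $j$.

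Next I would observe that this subsequence still satisfies the hypotheses of \cref{prop:final result about delocalisation}: the pairs $(\lambda_{m_j},\bm u_{m_j})$ are eigenpairs of $T_{m_j k}(f)$, we have $\norm{\bm u_{m_j}}=1$, and $\lambda_{m_j}\to\lambda\in\sigma_{ess}(T(f))$ since $(\lambda_{m_j})_j$ is a subsequence of the convergent sequence $(\lambda_m)_m$. Applying \cref{prop:final result about delocalisation} to $(\lambda_{m_j},\bm u_{m_j})_j$ therefore yields a further subsequence $(\bm u_{m_{j_\ell}})_\ell$ with $\norm{\bm u_{m_{j_\ell}}}_\infty \to 0$ as $\ell\to\infty$. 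This contradicts $\norm{\bm u_{m_{j_\ell}}}_\infty \geq \epsilon$ for all $\ell$, and hence $\norm{\bm u_m}_\infty\to 0$, i.e.\ $\bm u_m$ is delocalised.

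Since all the analytic content — the embedding into $\ell^2(\Z)$, the strong convergence of the shifted truncations via \cref{lem:strongly conv shift}, and the weak-to-$0$ convergence via \cref{prop: weak conv of eigenvectors to essential spectrum} — has already been carried out in the proof of \cref{prop:final result about delocalisation}, there is no real obstacle here; the only ``step'' is the elementary topological fact that a sequence in a metric space converges to $L$ iff every subsequence admits a further subsequence converging to $L$, applied with $L=0$ in the metric induced by $\norm{\cdot}_\infty$. If one prefers to avoid invoking that fact as a black box, one can write the contradiction argument out directly as above, which is what I would do.
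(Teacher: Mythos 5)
Your proposal is correct and follows essentially the same route as the paper: a contradiction argument that invokes \cref{prop:final result about delocalisation} to extract a subsequence whose $\norm{\cdot}_\infty$-norms vanish. If anything, you are slightly more careful than the paper's own write-up — the paper phrases the negation of delocalisation as ``$\norm{\bm u_m}_\infty > \varepsilon$ for all $m$'' rather than first passing to the subsequence on which that bound holds, a step you make explicit.
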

\begin{proof}
    Suppose by contradiction that for some $\epsilon>0$, $\bm u_m$ is such that $\norm{\bm u_m}_\infty>\varepsilon$ for all $m\in \N$. Then, by \cref{prop:final result about delocalisation}, we find a subsequence $\bm u_n$ of $\bm u_m$ with $\norm{\bm u_n}_\infty\to 0$, a contradiction. 
\end{proof}
For large finite Toeplitz matrices, the bulk of eigenmodes is thus delocalised, allowing us to reconstruct quasiperiodicity information.

\section{Truncated Floquet-Bloch transform}\label{sec: DFT}
The key tool that we will use is the discrete Fourier transform.
\begin{definition}[Discrete Fourier transform]
    We define the \emph{discrete Fourier transform} as
    \begin{equation}
    \begin{aligned}
        \dfrt:\C^m &\to \C^m\\
        \bm u &\mapsto \left(\frac{1}{\sqrt{m}}\sum_{s=0}^{m-1}\bm u^{(s)}e^{-\i2\pi\frac{j}{m}s}\right)^{(j)}\quad 0\leq j\leq m-1.
    \end{aligned}
    \end{equation}
    We recall the shorthand $\alpha_j \coloneqq 2\pi\frac{j}{m}$ and $\bm \omega_{j,m} \coloneqq \frac{1}{\sqrt{m}}(1,e^{\i\alpha_j},\dots,e^{\i\alpha_j(m-1)})^\top$ and find the following matrix representation of $\dfrt$:
    \begin{equation}
        \dfrt = \begin{pmatrix}
            | & & |\\
            \bm \omega_{0,m}& \dots & \bm \omega_{m-1,m}\\
            | && |
        \end{pmatrix} = \begin{pmatrix}
            \text{---} & \bm \omega_{0,m}^\top& \text{---}\\ &\dots& \\ \text{---}& \bm \omega_{m-1,m}^\top& \text{---}
        \end{pmatrix}.
    \end{equation}
    We also recall that the discrete Fourier transform is a unitary isomorphism.
\end{definition}

When considering block Toeplitz matrices $T_{mk}(f)\in \C^{mk\times mk}$, it is often useful to utilise the following unitary decomposition
\begin{equation}
\begin{aligned}
\label{eq: phi map slicing}
    \Phi: \C^{mk} &\to (\C^m)^k = \overbrace{\C^m\oplus\dots \oplus \C^m}^{k-\text{times}}\\
    \bm u &\mapsto (\Phi_1(\bm u),\dots,\Phi_k(\bm u)),
\end{aligned}
\end{equation}
where $\Phi_p(\bm u)\in \C^m$ denotes the \emph{$p$\textsuperscript{th} section} of $\bm u$, given by $(\Phi_p(\bm u))^{(s)} \coloneqq \bm u^{(p+sk)}$ for $s=0,\dots,m-1$. Here, we equip $\C^m\oplus\dots \oplus \C^m$ with the canonical norm $\norm{u} = \sqrt{\sum_{p=1}^k \norm{\Phi_p(u)}^2}$.

We can now define the truncated Floquet-Bloch transform, which is just a particular packing of the discrete Fourier transform. 

\begin{definition}[Truncated Floquet-Bloch transform]
\label{def: tfbt}
    We now define the \emph{truncated Floquet-Bloch transform} as the section-wise discrete Fourier transform
    \begin{equation}
    \begin{aligned}
        \mc{T}: (\C^m)^k &\to (\C^m)^k\\
        (\Phi_1(\bm u),\dots, \Phi_k(\bm u)) &\mapsto (\dfrt[\Phi_1(\bm u)],\dots,\dfrt[\Phi_k(\bm u)]),
    \end{aligned}
    \end{equation}
    which is again a unitary isomorphism. For vectors $\bm u\in \C^{mk}$, we will often write $\mc T(\bm u)$ to mean $\mc T(\Phi(\bm u))$ where it is clear from the context.
\end{definition}

The truncated Floquet-Bloch transform defined in \cref{def: tfbt} has, at first sight, little to do with the one originally presented in \cite{ammari.davies.ea2023Convergence, ammari.davies.ea2023Spectral}. However, they are very much related as \cref{fig: old-new-tfbt} shows. The formulation of \cref{def: tfbt} has a couple of advantages. The first is computational, although this is only minor and only occurs for very large structures. The second one is the mathematical framework, which thanks to the vast literature on the discrete Fourier transform, will allow us to prove various theorems.

\begin{figure}[h]
    \centering
    \includegraphics[width=0.5\linewidth]{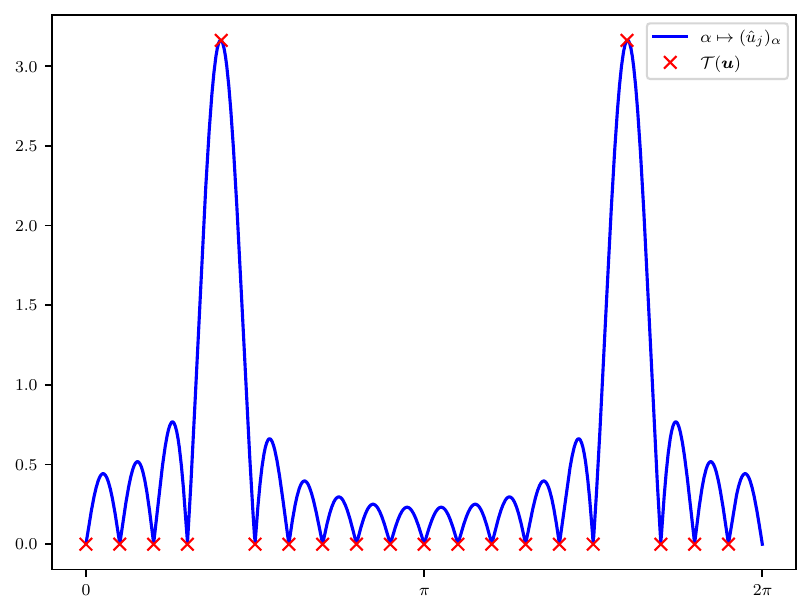}
    \caption{The ``original'' truncated Floquet-Bloch transform presented in \cite{ammari.davies.ea2023Convergence, ammari.davies.ea2023Spectral} in blue and the version from \cref{def: tfbt} with indices scaled as $r\mapsto 2\pi\frac{r}{m}$ for $0\leq r\leq m$ in red. The plot shows the two techniques applied to an eigenvector of a circulant matrix of size $20\times 20$.}
    \label{fig: old-new-tfbt}
\end{figure}

Intuitively, the $j$-th \enquote{row} of $\mc{T}(\bm u)$ contains the Fourier coefficients with respect to $\alpha_j\in Y^*_m$, obtained by taking the discrete Fourier transform of $\bm u$ section-wise for each of the $k$ sections.
To extract quasiperiodicity information from truncated Floquet-Bloch transform, it is thus useful to denote the rows as follows.
\begin{definition}[Truncated Floquet-Bloch projection]\label{def:dpf}
    We define the $j$\textsuperscript{th} \emph{truncated Floquet-Bloch projection} for $1\leq j\leq m$ as 
    \begin{equation}
        \begin{aligned}
            \mc{T}^j:\C^{mk} &\to \C^k\\
            \bm u &\mapsto ((\mathcal{F}(\Phi_1\bm u))^{(j)},\dots, (\mathcal{F}(\Phi_k\bm u))^{(j)})^\top.
        \end{aligned}
    \end{equation}
    For some $\bm u \in \C^{mk}$ this corresponds to calculating the \tftn of $\mc T(\bm u) \in (\C^m)^k$ and taking the $j$\textsuperscript{th} \enquote{row} of the result.
\end{definition}
A good way of picturing what $\mathcal{T}^j$ does is to look at \cref{fig: old-new-tfbt} where we plotted in red $(\mathcal{T}^j(\bm u))_{0\leq j\leq m-1}$. What we are actually interested in is the index of the biggest Fourier coefficient, as this one is the one that leads the periodic behaviour of $\bm u$. The following definition is a better-behaved version of taking the $\argmax$ of $(\mathcal{T}^j(\bm u))_{0\leq j\leq m-1}$ to identify the peak quasiperiodicity.
\begin{definition}[Discrete quasiperiodicity]
    Let $\bm u \in \C^{mk}$. Then the \emph{discrete quasiperiodicity} associated to $\bm u$ is given by
    \begin{align*}
        \mathcal{Q}_m(\bm u)\coloneqq \sum_{\alpha_j\in Y^*_m} \abs{\alpha_j}\norm{\mathcal{T}^{j}(\bm u)}^2, 
    \end{align*}
    where we extend $\mc T^j$ to negative indices $j$ by taking $j$ modulo $m$.
\end{definition}
This amounts to taking the weighted average of quasiperiodicities, using $\norm{\mathcal{T}^{j}(\bm u)}^2$ to measure resonance strength. We take the absolute value $\abs{\alpha_j}$ to account for the symmetry $\lambda_p(\expi{\alpha})=\lambda_p(e^{-\i\alpha})$.

\subsection{Truncated Floquet-Bloch transform and circulant matrices}
To elucidate the \tftn we investigate how it interacts with the eigenvectors of circulant matrixes $C_{mk}(f)$.

\begin{example}\label{ex:tfbt for circulant eves}
     By \cref{prop:eigs of k-circulant}, the eigenvectors of $C_{mk}(f)$ are given by the quasiperiodic extension of symbol matrix eigenvectors: $QP_m(\bm u_p(\expi{\alpha_s}),\expi{\alpha_s})$. Their section decomposition is given by 
    \[
    \Phi(QP_m(\bm u_p(\expi{\alpha_s}),\expi{\alpha_s})) = (\bm u_{p}^{(1)}(\expi{\alpha_s})\bm \omega_{s,m}, \dots,  \bm u_{p}^{(k)}(\expi{\alpha_s})\bm \omega_{s,m}).
    \]
    Using $\dfrt(\bm \omega_{s,m}) = \bm e_s$, we then find that 
    \[
        \mc{T}[QP_m(\bm u_p(\expi{\alpha_s}),\expi{\alpha_s})] = (\bm u_{p}^{(1)}(\expi{\alpha_s}) \bm e_s, \dots,  \bm u_{p}^{(k)}(\expi{\alpha_s})\bm e_s)
    \] and 
    \[
        \mc{T}^j[QP_m(\bm u_p(\expi{\alpha_s}),\expi{\alpha_s})] = (\bm u_{p}^{(1)}(\expi{\alpha_s}) \delta_{js}, \dots,  \bm u_{p}^{(k)}(\expi{\alpha_s})\delta_{js})^\perp.
    \]
    Intuitively, because the circulant eigenvectors are the $\alpha_s$-quasiperiodic extensions of the symbol eigenvectors, each of their sections must be proportional to $\bm \omega_{s,m}$. Applying the discrete Fourier transform section-wise, we find that for every section, the only nonzero Fourier coefficient is at entry $s$. 
    Because $\norm{\bm u_p(\expi{\alpha_s})}=1$ this yields $\norm{\mc{T}^j[QP_m(\bm u_p(\expi{\alpha_s}),\expi{\alpha_s})]}^2 = \delta_{js}$, and we can find that
    \begin{align*}
        \mathcal{Q}_m(QP_m(\bm u_p(\expi{\alpha_s}),\expi{\alpha_s})))=\sum_{\alpha_j\in Y^*_m} \abs{\alpha_j} \delta_{js} =  \alpha_s.
    \end{align*}

    For eigenvectors of $C_{mk}(f)$ we can thus recover the quasiperiodicity using the Floquet-Bloch transform. 
    
    This reconstruction also respects the symmetry $\lambda_p(\expi{\alpha})=\lambda_p(e^{-\i\alpha})$ as follows
    \begin{gather*}
        \mathcal{Q}_m(c_1QP_m(\bm u_p(\expi{\alpha_s}),\expi{\alpha_s}))+c_2QP_m(\bm u_p(\expi{\alpha_{-s}}),\expi{\alpha_{-s}}))) = \sum_{\alpha_j\in Y^*_m} \abs{\alpha_j} (\abs{c_1}^2\delta_{js}+\abs{c_2}^2\delta_{j,-s})\\
        =\abs{c_1}^2\abs{\alpha_s}+\abs{c_1}^2\abs{\alpha_{-s}} = (\abs{c_1}^2+\abs{c_2}^2)\alpha_s,
    \end{gather*}
    where we assumed $s>0$ without loss of generality. For any $\bm u$ in the eigenspace of $\lambda_p(\expi{\alpha_s})$, $\mc Q_m(\bm u)$ thus returns the the quasiperiodicity $\abs{\alpha_s}$.
\end{example}

Having observed that the truncated Floquet-Bloch transform recovers the quasiperiodicity of circulant matrix eigenvectors, we now aim to extend our understanding to pseudo-eigenvectors. This will then enable us to extend the quasiperiodicity recovery to more complex systems.

\begin{definition}
    For $A\in \C^{n\times n}$ a normal matrix and $\lambda_\varepsilon\in \C$,
    we can decompose $\C^n$ into the \emph{near and far eigenspaces}:
    \[E_\varepsilon \coloneqq \bigoplus_{\substack{\lambda'\in \sigma(A)\\\abs{\lambda'-\lambda_\varepsilon}\leq \varepsilon}}\operatorname{Eig}_{\lambda'}(A)\quad\text{ and }\quad E^\perp_\varepsilon = \bigoplus_{\substack{\lambda'\in \sigma(A)\\\abs{\lambda'-\lambda_\varepsilon}>\varepsilon}}\operatorname{Eig}_{\lambda'}(A)\] with decomposition $\C^n = E_\varepsilon \oplus E^\perp_\varepsilon$.
\end{definition}
For a normal matrix $A$, if $(\lambda, \bm u)$ is a pseudo-eigenpair of $A$, then $\bm u$ is well-approximated by vectors laying in the near eigenspace of $\lambda$. The following result makes this formal.
\begin{lemma}\label{lem:eigenspaceconv}
    Let $\varepsilon>0$. Let $A\in \C^{n\times n}$ be a normal matrix and let $(\lambda_\varepsilon,u_\varepsilon)$ be an $\varepsilon^2$-pseudo-eigenpair, \emph{i.e.}, $\norm{(A-\lambda_\varepsilon)u_\varepsilon}<\varepsilon^2$ and $\norm{u_\varepsilon}=1$. Then, decomposing $u_\varepsilon = u_\parallel + u_\perp \in E_\varepsilon\oplus E_\varepsilon^\perp$,  we find that
    \[
        \norm{u_\perp} < \varepsilon \quad \text{ and } \quad \norm{u_\parallel} > \sqrt{1 - \varepsilon^2}.
    \]
\end{lemma}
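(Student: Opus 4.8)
The plan is to diagonalise $A$ and split the Pythagorean identity along the near/far decomposition. Since $A$ is normal, there is an orthonormal eigenbasis $v_1,\dots,v_n$ with $Av_j = \mu_j v_j$, and I would write $u_\varepsilon = \sum_j c_j v_j$. Grouping the indices according to whether $\abs{\mu_j - \lambda_\varepsilon} \leq \varepsilon$ or $>\varepsilon$ gives exactly the orthogonal decomposition $u_\varepsilon = u_\parallel + u_\perp \in E_\varepsilon \oplus E_\varepsilon^\perp$, and by orthonormality $\norm{u_\parallel}^2 + \norm{u_\perp}^2 = \norm{u_\varepsilon}^2 = 1$.

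Next I would exploit that $(A-\lambda_\varepsilon)$ acts diagonally: $(A-\lambda_\varepsilon)u_\varepsilon = \sum_j c_j(\mu_j - \lambda_\varepsilon)v_j$, so
\[
    \varepsilon^4 > \norm{(A-\lambda_\varepsilon)u_\varepsilon}^2 = \sum_j \abs{c_j}^2 \abs{\mu_j - \lambda_\varepsilon}^2 \geq \sum_{\abs{\mu_j - \lambda_\varepsilon}>\varepsilon} \abs{c_j}^2 \abs{\mu_j - \lambda_\varepsilon}^2 > \varepsilon^2 \sum_{\abs{\mu_j - \lambda_\varepsilon}>\varepsilon} \abs{c_j}^2 = \varepsilon^2 \norm{u_\perp}^2,
\]
where the last strict inequality uses $\abs{\mu_j - \lambda_\varepsilon}^2 > \varepsilon^2$ on the far part (I should note that if $u_\perp = 0$ the claimed bound $\norm{u_\perp}<\varepsilon$ holds trivially, so I may assume the far sum is nonempty when invoking strictness). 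Dividing by $\varepsilon^2$ gives $\norm{u_\perp}^2 < \varepsilon^2$, hence $\norm{u_\perp} < \varepsilon$.

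Finally, from $\norm{u_\parallel}^2 = 1 - \norm{u_\perp}^2 > 1 - \varepsilon^2$ I conclude $\norm{u_\parallel} > \sqrt{1-\varepsilon^2}$ (this is only meaningful, and the square root only real, when $\varepsilon<1$, which is the regime of interest). There is no real obstacle here; the only points requiring a little care are handling the edge case $u_\perp=0$ so the strict inequality is legitimate, and making sure the near/far index grouping genuinely reproduces the spaces $E_\varepsilon, E_\varepsilon^\perp$ from the definition (which it does, since each $\operatorname{Eig}_{\mu}(A)$ is spanned by the $v_j$ with $\mu_j = \mu$, and these are collected according to the same threshold on $\abs{\mu-\lambda_\varepsilon}$).
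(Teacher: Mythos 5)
Your proof is correct and follows essentially the same strategy as the paper's: diagonalise $A$ in an orthonormal eigenbasis, expand $\norm{(A-\lambda_\varepsilon)u_\varepsilon}^2$ as a weighted sum of $\abs{c_j}^2\abs{\mu_j-\lambda_\varepsilon}^2$, drop the near terms, bound the far weights below by $\varepsilon^2$, and finish with Pythagoras. Your remark on the $u_\perp = 0$ edge case (so the strict inequality is legitimate) is a small point of care the paper glosses over, and you avoid the paper's unnecessary ``no double eigenvalues'' reduction by working directly with an eigenbasis; otherwise the two arguments are identical.
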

\begin{proof}
    We assume without loss of generality that $A$ has no double eigenvalues. We diagonalise $A = UDU^*$ with $UU^*=\operatorname{Id}$ and $\operatorname{Id}$ being the identity. Then, we write $\Tilde{u}_\varepsilon\coloneqq U^*u_\varepsilon$ and calculate
    \begin{gather*}
        \varepsilon^4>\norm{Au_\varepsilon-\lambda_\varepsilon I_n u_\varepsilon}^2 
        = \norm{UDU^*u_\varepsilon-\lambda_\varepsilon UI_nU^* u_\varepsilon}^2 
        = \norm{D\Tilde{u}_\varepsilon-\lambda_\varepsilon I_n\Tilde{u}_\varepsilon}^2\\ 
        = \norm{(D-\lambda_\varepsilon I_n)\Tilde{u}_\varepsilon}^2
        = \sum_{j=1}^n\abs{\lambda_i-\lambda_\varepsilon}^2 \abs{\Tilde{u}_\varepsilon^{(j)}}^2.
    \end{gather*}
    Denoting by $u_j$ the unit eigenvector associated to $\lambda_j$, we find that $$\abs{\Tilde{u}_\varepsilon^{(j)}}^2 = \abs{\ip{\Tilde{u}_\varepsilon}{e_j}}^2 = \abs{\ip{U\Tilde{u}_\varepsilon}{Ue_j}}^2 = \abs{\ip{u_\varepsilon}{u_j}}^2.$$ If we now restrict the above sum to $E_\varepsilon^\perp$, then  we obtain that
    \begin{gather*}
        \varepsilon^4>  \sum_{(\lambda',u')\in E_\varepsilon^\perp}\abs{\lambda'-\lambda_\varepsilon}^2 \abs{\ip{u_\varepsilon}{u'}}^2 > \sum_{u'\in E_\varepsilon^\perp}\varepsilon^2 \abs{\ip{u_\varepsilon}{u'}}^2,
      \end{gather*}
      and thus
      \begin{gather*}
        \varepsilon > \sqrt{\sum_{u'\in E_\varepsilon^\perp} \abs{\ip{u_\varepsilon}{u'}}^2} = \norm{u_\perp}.
    \end{gather*}
    Finally, orthogonality yields
    \[
    1 = \norm{u_\varepsilon}^2 = \norm{u_\parallel}^2 + \underbrace{\norm{u_\perp}^2}_{<\varepsilon^2}
    \]
    and thus $\norm{u_\parallel} > \sqrt{1 - \varepsilon^2}$.
\end{proof}
\begin{lemma}\label{lem:bandinverse}
    Let $f$ be a continuously differentiable symbol and let $\lambda_0 =  \lambda_p(e^{\i\alpha_0})$ where $e^{\i\alpha_0}\neq \pm1$.
    Then for $\varepsilon$ small enough
    \begin{align}
         (\lambda_0-\epsilon,\lambda_0+\epsilon)\cap \sigma(C_{mk}(f))\subset \lambda_p(\expi{Y^*_{m,\kappa\varepsilon}(\alpha_0)})
    \end{align}
    where $\kappa = 2
    \abs{\frac{1}{(\lambda_p)'(\expi{\alpha_0})}}$ and 
    \[
        Y^*_{m,\kappa\varepsilon}(\alpha_0) = \big((\alpha_0-\kappa\varepsilon, \alpha_0+\kappa\varepsilon) \cup (-\alpha_0-\kappa\varepsilon, -\alpha_0+\kappa\varepsilon)\big) \cap Y^*_m .
    \]
\end{lemma}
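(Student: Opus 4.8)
The plan is to combine the exact description of the circulant spectrum from \cref{prop:eigs of k-circulant}, namely $\sigma(C_{mk}(f))=\bigcup_{q=1}^{k}\lambda_q(\expi{Y^*_m})$, with a quantitative inverse-function estimate for the band $\lambda_p$ near $\alpha_0$. Take any $\mu\in(\lambda_0-\varepsilon,\lambda_0+\varepsilon)\cap\sigma(C_{mk}(f))$; then $\mu=\lambda_q(\expi{\alpha_j})$ for some $q\in\{1,\dots,k\}$ and some $\alpha_j\in Y^*_m$. Since, by \cref{assumptions}(iii), the images $\lambda_q(\TO)$ are compact and pairwise disjoint, they are separated by a distance $d>0$ depending only on $f$; hence if $\varepsilon<d$, then $\mu$ lies within $\varepsilon$ of $\lambda_0\in\lambda_p(\TO)$, which forces $q=p$. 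From now on $\mu=\lambda_p(\expi{\alpha_j})$.

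Next I would study $\lambda_p$ near $\alpha_0$. Using the symmetry $\lambda_p(\expi{\alpha})=\lambda_p(\expi{-\alpha})$ and the symmetry of $Y^*_{m,\kappa\varepsilon}(\alpha_0)$ under $\alpha_0\mapsto-\alpha_0$, I may assume $\alpha_0\in(0,\pi)$ (the hypothesis $\expi{\alpha_0}\neq\pm1$ rules out $\alpha_0\in\{0,\pi\}$). By \cref{assumptions}(iv)--(v), $\lambda_p$ is $C^1$ on $(0,\pi)$ with nowhere vanishing derivative, hence strictly monotone, so $\lambda_p|_{(0,\pi)}$ is a $C^1$ diffeomorphism onto an open interval $J$ whose endpoints are $\lambda_p(1)$ and $\lambda_p(-1)$, with $\lambda_0$ strictly inside $J$; its inverse $g\colon J\to(0,\pi)$ is $C^1$ with $g(\lambda_0)=\alpha_0$ and $|g'(\lambda_0)|=|1/\lambda_p'(\expi{\alpha_0})|=\kappa/2$. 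Since $g'$ is continuous and nonzero at $\lambda_0$, there is $\delta>0$, depending only on $f$ and $\lambda_0$, with $|g'|<2|g'(\lambda_0)|=\kappa$ on $[\lambda_0-\delta,\lambda_0+\delta]\subset J$. I would then require $\varepsilon<\min\{d,\delta,\operatorname{dist}(\lambda_0,\partial J)\}$.

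Finally I would locate the sample point. As $\mu\in(\lambda_0-\varepsilon,\lambda_0+\varepsilon)\subset J$ is bounded away from $\partial J=\{\lambda_p(1),\lambda_p(-1)\}$, the index $\alpha_j$ cannot be $0$, $\pi$ or $-\pi$, so $\alpha_j\in(0,\pi)$ or $\alpha_j\in(-\pi,0)$. In the first case injectivity gives $\alpha_j=g(\mu)$, and the mean value theorem yields $|\alpha_j-\alpha_0|=|g(\mu)-g(\lambda_0)|\le\sup|g'|\,|\mu-\lambda_0|<\kappa\varepsilon$, so $\alpha_j\in(\alpha_0-\kappa\varepsilon,\alpha_0+\kappa\varepsilon)$; in the second case the same estimate applied to $-\alpha_j\in(0,\pi)$ gives $\alpha_j\in(-\alpha_0-\kappa\varepsilon,-\alpha_0+\kappa\varepsilon)$. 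Either way $\alpha_j\in Y^*_{m,\kappa\varepsilon}(\alpha_0)$, hence $\mu=\lambda_p(\expi{\alpha_j})\in\lambda_p(\expi{Y^*_{m,\kappa\varepsilon}(\alpha_0)})$, as claimed. The one point I would be careful about — the only step beyond routine calculus — is that the threshold for ``$\varepsilon$ small enough'' is independent of $m$; this is clear since $d$, $\delta$ and $J$ are determined by $f$ and $\lambda_0$ alone, whereas $m$ only governs which sample points $\alpha_j\in Y^*_m$ are available, which is precisely the uniformity needed when this lemma is applied in the $m\to\infty$ arguments that follow.
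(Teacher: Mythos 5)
Your proposal is correct and follows essentially the same approach as the paper: both use the disjointness of band images to localize to band $p$, the symmetry to restrict to the upper half of $\TO$, the absence of van der Hove points to invert $\lambda_p$, and a Lipschitz/mean-value estimate with constant $\kappa$ chosen by continuity of $(\lambda_p)'$ near $\alpha_0$. The paper phrases the conclusion as a forward set inclusion while you trace an arbitrary point $\mu$ of the intersection back through the inverse, but the key estimates are identical.
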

\begin{proof}
    For $\varepsilon>0$ small enough, the entire $\varepsilon$-neighbourhood of $\lambda_0$ is contained in the band, as $\lambda_0$ lies its interior, \emph{i.e.}, $[\lambda_0-\varepsilon,\lambda_0+\varepsilon] \subset \operatorname{int} \lambda_p(\TO)$.
    Since the bands do not cross each other, this yields  $[\lambda_0-\varepsilon,\lambda_0+\varepsilon]\cap \lambda_{q}(\mathbb{T})\neq \emptyset$ if and only if $q =  p$. By the symmetry $\lambda_p(\expi{\alpha}) = \lambda_p(e^{-\i\alpha})$ of the band functions we can restrict ourselves to $\TO\cap \mc{H}$ for now, where $\mc{H}\coloneqq\{z\in \C\mid \Im z>0\}$ denotes the upper half of the complex plane. Because we have assumed that $\lambda_p$ has no van der Hove points, it must be injective on $\TO\cap \mc{H}$ and we can find the continuous inverse $\lambda_p^{-1}:[\lambda_0-\varepsilon,\lambda_0+\varepsilon] \to \TO\cap \mc{H}$. Since $\lambda_p^{-1}$ is defined on a compact set, it is Lipschitz continuous with constant $C_1 = \max_{\lambda'\in [\lambda_0-\varepsilon,\lambda_0+\varepsilon]}
    \abs{\frac{1}{(\lambda_p)'(\lambda^{-1}_p(\lambda'))}}$. Furthermore, for $\varepsilon$ small enough, we can use the fact that $(\lambda_p)'(\lambda^{-1}_p(\lambda'))$ is continuous around $\alpha_0 = \lambda^{-1}_p(\lambda_0)$ to instead choose the Lipschitz constant $\kappa = 2\abs{\frac{1}{(\lambda_p)'(\expi{\alpha_0})}} \geq C_1$, which is well-defined as we assumed no van der Hove points.
    
    We can use the Lipschitz continuity of $\lambda_p^{-1}$ to bound 
    \[
    (\lambda_0-\varepsilon,\lambda_0+\varepsilon) \subset \lambda_p\left(\left\{\expi{\alpha}:\alpha \in (\alpha_0-\kappa\varepsilon, \alpha_0+\kappa\varepsilon) \cup (-\alpha_0-\kappa\varepsilon, -\alpha_0+\kappa\varepsilon) \right\} \right),
    \]
    where we incurred $\pm\alpha_0$ because of the band function symmetry. Finally, we  use $\sigma(C_{mk}(f)) = \cup_{p'} \lambda_{p'}(\{\expi{\alpha_{j}}\mid \alpha_j\in Y^*_m\})$ to get the desired inclusion.
\end{proof}

\begin{definition}[Convergent pseudo-eigenpair]
    Let $T_m$ be a sequence of normal operators converging strongly to some normal operator $T$. We say that a sequence of tuples $(\lambda_m,\bm u_m)$ with $\norm{u_m} = 1$ is a \emph{convergent sequence of pseudo-eigenpairs} to $\lambda$ if $\lambda_m\to \lambda$ and $\norm{T_m\bm u_m - \lambda_m\bm u_m}\to 0$ as $m\to \infty$.
\end{definition}

Note that the fact that $T_m$ and $T$ are normal implies that $\bm u_m$ is a Weyl sequence for $T$ at $\lambda$ and thus $\lambda\in \sigma_{ess}(T)$.

The following result shows that analogously to \cref{ex:tfbt for circulant eves}, the \tftn converges to an approximate Kronecker delta vector also for convergent pseudo-eigenvectors of circulant matrices.
\begin{proposition}\label{prop: circulant conv pEve has spiky TFT}
    Consider a symbol $f\in C_{k\times k}$ and let $(\lambda_m,\bm u_m)$ be a convergent sequence of pseudo-eigenpairs of $C_{mk}(f)$ where $\lambda_m\to\lambda_0=\lambda_p(\expi{\alpha_0})$ such that $e^{\i\alpha_0}\neq \pm1$. Let $\varepsilon>0$ be sufficiently small and $m$ large enough such that $\norm{C_{mk}(f)u_m-\lambda_mu_m}<\varepsilon^2$ and $\abs{\lambda_m-\lambda_0}<\varepsilon$, then we have
    \[
        \sum_{\alpha_j\in(Y^*_{m,\delta}(\alpha_0))^c}\norm{\mc{T}^j(\bm u_m)}^2 < \varepsilon^2 \quad \text{and} \quad \sum_{\alpha_j\in Y^*_{m,\delta}(\alpha_0)}\norm{\mc{T}^j(\bm u_m)}^2 > 1 - \varepsilon^2 , 
    \]
    where \begin{equation} \label{delta} \delta = 
    \frac{4\varepsilon}{\abs{(\lambda_p)'(\expi{\alpha_0})}},\end{equation}
    $Y^*_{m,\delta}$ as in \cref{lem:bandinverse} and $(Y^*_{m,\delta}(\alpha_0))^c = Y^*_m\setminus Y^*_{m,\delta}(\alpha_0)$.
\end{proposition}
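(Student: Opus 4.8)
The plan is to combine three of the results established above: the pseudo-eigenvector localisation estimate \cref{lem:eigenspaceconv}, the band-inverse estimate \cref{lem:bandinverse}, and the explicit computation of the truncated Floquet-Bloch transform of circulant eigenvectors carried out in \cref{ex:tfbt for circulant eves}. First I would apply \cref{lem:eigenspaceconv} with $A = C_{mk}(f)$ and $\lambda_\varepsilon = \lambda_m$: since $\norm{C_{mk}(f)\bm u_m - \lambda_m \bm u_m} < \varepsilon^2$ and $\norm{\bm u_m} = 1$, this yields a decomposition $\bm u_m = u_\parallel + u_\perp \in E_\varepsilon \oplus E_\varepsilon^\perp$ with $\norm{u_\perp} < \varepsilon$, where $E_\varepsilon$ is the span of the eigenvectors of $C_{mk}(f)$ whose eigenvalue lies within $\varepsilon$ of $\lambda_m$.

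Next I would identify $E_\varepsilon$ with quasiperiodicities near $\pm\alpha_0$. Any eigenvalue $\lambda'$ of $C_{mk}(f)$ with $\abs{\lambda'-\lambda_m}<\varepsilon$ satisfies $\abs{\lambda'-\lambda_0}<2\varepsilon$, because $\abs{\lambda_m-\lambda_0}<\varepsilon$. Applying \cref{lem:bandinverse} with $2\varepsilon$ in place of $\varepsilon$ (legitimate for $\varepsilon$ small, as $\expi{\alpha_0}\neq\pm1$ and $\lambda_0$ lies in the interior of band $p$, so no other band enters the $2\varepsilon$-neighbourhood by the no-band-crossing part of \cref{assumptions}), and observing that $\kappa\cdot 2\varepsilon = 4\varepsilon/\abs{(\lambda_p)'(\expi{\alpha_0})} = \delta$, every such $\lambda'$ equals $\lambda_p(\expi{\alpha_j})$ for some $\alpha_j\in Y^*_{m,\delta}(\alpha_0)$. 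By \cref{prop:eigs of k-circulant} the associated eigenvectors are precisely the quasiperiodic extensions $QP_m(\bm u_p(\expi{\alpha_j}),\expi{\alpha_j})$ with $\alpha_j\in Y^*_{m,\delta}(\alpha_0)$, hence
\[
    E_\varepsilon \subseteq V \coloneqq \operatorname{span}\big\{ QP_m(\bm u_p(\expi{\alpha_j}),\expi{\alpha_j}) : \alpha_j \in Y^*_{m,\delta}(\alpha_0)\big\}.
\]

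The conclusion then follows from the explicit transform. By \cref{ex:tfbt for circulant eves}, $\mc{T}^j[QP_m(\bm u_p(\expi{\alpha_s}),\expi{\alpha_s})] = 0$ whenever $j\neq s$; since $\mc{T}^j$ is linear this gives $\mc{T}^j(v)=0$ for all $v\in V$ and all indices $j$ with $\alpha_j\notin Y^*_{m,\delta}(\alpha_0)$, and in particular $\mc{T}^j(u_\parallel)=0$ for all such $j$. Because $\mc{T}$ (together with the section decomposition $\Phi$) is a unitary isomorphism and $\mc{T}^j$ extracts the $j$-th entry of each of the $k$ sections, we have $\sum_{\alpha_j\in Y^*_m}\norm{\mc{T}^j(\bm v)}^2 = \norm{\bm v}^2$ for any $\bm v\in\C^{mk}$. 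For $\alpha_j\in (Y^*_{m,\delta}(\alpha_0))^c$ we get $\mc{T}^j(\bm u_m) = \mc{T}^j(u_\parallel)+\mc{T}^j(u_\perp) = \mc{T}^j(u_\perp)$, so that
\[
    \sum_{\alpha_j\in(Y^*_{m,\delta}(\alpha_0))^c}\norm{\mc{T}^j(\bm u_m)}^2 = \sum_{\alpha_j\in(Y^*_{m,\delta}(\alpha_0))^c}\norm{\mc{T}^j(u_\perp)}^2 \leq \norm{u_\perp}^2 < \varepsilon^2,
\]
and the second inequality is then immediate from $\sum_{\alpha_j\in Y^*_{m,\delta}(\alpha_0)}\norm{\mc{T}^j(\bm u_m)}^2 = \norm{\bm u_m}^2 - \sum_{\alpha_j\in(Y^*_{m,\delta}(\alpha_0))^c}\norm{\mc{T}^j(\bm u_m)}^2 > 1-\varepsilon^2$.

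The main obstacle is the second step: one must carefully track the constant through \cref{lem:bandinverse} when it is applied at radius $2\varepsilon$ (the factor $2$ absorbing the offset $\abs{\lambda_m-\lambda_0}<\varepsilon$ between $\lambda_m$ and $\lambda_0$), verify that this reproduces exactly the $\delta$ of \eqref{delta}, and make the smallness requirement on $\varepsilon$ uniform enough that $\lambda_0$ stays in the interior of band $p$ and no neighbouring band intrudes into the $2\varepsilon$-neighbourhood. Once the inclusion $E_\varepsilon\subseteq V$ is secured, the remainder is straightforward bookkeeping with the unitarity of $\mc{T}$ and the Kronecker-delta structure of \cref{ex:tfbt for circulant eves}.
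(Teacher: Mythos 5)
Your proposal is correct and follows the same route as the paper: apply \cref{lem:eigenspaceconv} to get the decomposition $\bm u_m = u_\parallel + u_\perp$, use \cref{lem:bandinverse} (at radius $2\varepsilon$, accounting for the offset $\abs{\lambda_m-\lambda_0}<\varepsilon$) to identify $E_\varepsilon$ with the span of quasiperiodic extensions over $Y^*_{m,\delta}(\alpha_0)$, and then invoke the Kronecker-delta structure from \cref{ex:tfbt for circulant eves} together with the unitarity of $\mc T$. The only difference is that you work directly with $u_\parallel\in E_\varepsilon\subseteq V$, while the paper additionally re-decomposes $\bm u_m = \bm v_\parallel + \bm v_\perp$ with respect to the larger space $F_\varepsilon = V$ before running the estimate; your version bypasses that intermediate step, which is a genuine (if small) streamlining, since $u_\parallel\in F_\varepsilon$ already suffices to conclude $\mc T^j(u_\parallel)=0$ for $\alpha_j\notin Y^*_{m,\delta}(\alpha_0)$.
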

\begin{proof}
    Using \cref{lem:eigenspaceconv}, we can decompose $\bm u_m = \bm u_\parallel \in E_\delta + \bm u_\perp \in E_\delta^\perp$. Because $e^{\i\alpha_0}\neq \pm1$, $\lambda_0$ must be away from the band edge and inside the $p$\textsuperscript{th} band. \cref{lem:bandinverse} then applies for $\varepsilon$ small enough to obtain
    \[
         (\lambda_m-\varepsilon,\lambda_m+\varepsilon)\cap \sigma(C_{mk}(f))\subset (\lambda_0-2\varepsilon,\lambda_0+2\varepsilon)\cap \sigma(C_{mk}(f)) \subset \lambda_p(e^{\i Y^*_{m,\delta}(\alpha_0)}),
    \]
    where $\delta$ is given by \eqref{delta} and $Y^*_\delta$ as in \cref{lem:bandinverse}.
    Consequently, we have
    \[
    \bm u_\parallel\in E_\varepsilon \subset \bigoplus_{\substack{\lambda_q=\expi{\alpha_s}\\\alpha_s\in Y^*_{m,\delta}(\alpha_0)}} \operatorname{Eig}_{\lambda_q}(C_{mk}(f)) \eqqcolon F_\varepsilon,
    \]
    which allows us to orthogonally decompose $\bm u_m = \bm v_\parallel\in F_\varepsilon + \bm v_\perp\in F_\varepsilon^\perp$
    and by $E_\varepsilon \subset F_\varepsilon$, we have $\norm{\bm v_\parallel} \geq \norm{\bm u_\parallel}$ and $\norm{\bm v_\perp} \leq \norm{\bm u_\perp}$.

    Now we want to understand $\mc T(\bm u_m)$ and first aim to show that $\mc T^j(\bm u_m) = \mc T^j(\bm v_\perp)$ if $\alpha_j\notin  Y^*_{m,\delta}(\alpha_0)$, which by the linearity of $\mc T^j$ is equivalent to proving $\mc T^j(\bm v_\parallel)=\bm 0$ for such $j$.
    We can write  
    \[
        \bm v_\parallel = \sum_{\alpha_s\in Y^*_{m,\delta}(\alpha_0)} c_s QP_m(\bm u_p(\expi{\alpha_s}),\expi{\alpha_s}),
    \]
    but by \cref{ex:tfbt for circulant eves} we know that $\norm{\mc{T}^j[QP_m(\bm u_p(\expi{\alpha_s}),\expi{\alpha_s})]}^2 = \delta_{js}$ and 
    \[
    \norm{\mc T^j(\bm v_\parallel)}^2 = \sum_{\alpha_s\in Y^*_{m,\delta}(\alpha_0)} \abs{c_s}^2 \delta_{js},
    \]
    equals zero for $\alpha_j\notin Y^*_{m,\delta}(\alpha_0)$, as desired.

    Finally, we calculate 
    \[
    \sum_{\alpha_j \in (Y^*_{m,\delta}(\alpha_0))^c}\norm{\mc{T}^j(\bm u_m)}^2 = \sum_{\alpha_j \in (Y^*_{m,\delta}(\alpha_0))^c}\norm{\mc{T}^j(\bm v_\perp)}^2 \leq \sum_{\alpha_j \in Y^*_{m}}\norm{\mc{T}^j(\bm v_\perp)}^2 = \norm{\mc{T}(\bm v_\perp)}^2 < \varepsilon^2
    \]
    and 
    \[
    \sum_{\alpha_j\in Y^*_{m,\delta}(\alpha_0)}\norm{\mc{T}^j(\bm u_m)}^2 = 1-\sum_{\alpha_j\in (Y^*_{m,\delta}(\alpha_0))^c}\norm{\mc{T}^j(\bm u_m)}^2 > 1-\varepsilon^2.
    \]
\end{proof}
\begin{corollary}\label{cor: Q recovers quasifrequency in circulant}
    Consider $f\in C_{k\times k}$ with $C_{mk}(f)$ the corresponding sequence of circulant matrices. Let $(\lambda_m,\bm u_m)$ be a convergent sequence of pseudo-eigenpairs to $\lambda_0=\lambda_p(\expi{\alpha_0})$ such that $e^{\i\alpha_0}\neq \pm1$. Then,
    $$
    \alpha_0 = \pm \lim_{m\to\infty}\mathcal{Q}_m(\bm u_{m}).
    $$
\end{corollary}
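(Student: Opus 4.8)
The plan is to read off the result from \cref{prop: circulant conv pEve has spiky TFT}, which already says that, for $\varepsilon$ small and $m$ large, essentially all the mass of $\mc{T}(\bm u_m)$ sits on the quasiperiodicities $\alpha_j\in Y^*_{m,\delta}(\alpha_0)$, \emph{i.e.}, those within $\delta = 4\varepsilon/\abs{(\lambda_p)'(\expi{\alpha_0})}$ of $\pm\alpha_0$. Since $e^{\i\alpha_0}\neq\pm1$, I may assume without loss of generality $\alpha_0\in(0,\pi)$; the case $\alpha_0\in(-\pi,0)$ follows by the band symmetry $\lambda_p(\expi{\alpha})=\lambda_p(e^{-\i\alpha})$ and accounts for the minus sign in the statement (note also that $\mc{Q}_m$ is by construction a weighted average of the nonnegative numbers $\abs{\alpha_j}$, so it can only ever recover $\abs{\alpha_0}$).

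First I would record the normalisation $\sum_{\alpha_j\in Y^*_m}\norm{\mc{T}^j(\bm u_m)}^2 = \norm{\mc{T}(\bm u_m)}^2 = \norm{\bm u_m}^2 = 1$, valid because $\mc{T}$ is a unitary isomorphism. This allows me to write
\[
    \mc{Q}_m(\bm u_m) - \alpha_0 = \sum_{\alpha_j\in Y^*_m}\bigl(\abs{\alpha_j}-\alpha_0\bigr)\norm{\mc{T}^j(\bm u_m)}^2
\]
and split the right-hand side over $Y^*_{m,\delta}(\alpha_0)$ and its complement $(Y^*_{m,\delta}(\alpha_0))^c$. On the first set, every $\alpha_j$ satisfies $\abs{\alpha_j}\in(\alpha_0-\delta,\alpha_0+\delta)$ once $\varepsilon$ is small enough that $\delta<\alpha_0$, so that term is bounded by $\delta\sum_{\alpha_j}\norm{\mc{T}^j(\bm u_m)}^2\leq\delta$. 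On the complement, $\abs{\abs{\alpha_j}-\alpha_0}\leq\pi$ and, by \cref{prop: circulant conv pEve has spiky TFT}, $\sum_{\alpha_j\in(Y^*_{m,\delta}(\alpha_0))^c}\norm{\mc{T}^j(\bm u_m)}^2<\varepsilon^2$, so that term is bounded by $\pi\varepsilon^2$. Hence
\[
    \abs{\mc{Q}_m(\bm u_m) - \alpha_0} < \delta + \pi\varepsilon^2 = \frac{4\varepsilon}{\abs{(\lambda_p)'(\expi{\alpha_0})}} + \pi\varepsilon^2.
\]

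Finally I would pass to the limit in $m$. Because $(\lambda_m,\bm u_m)$ is a convergent sequence of pseudo-eigenpairs, $\norm{C_{mk}(f)\bm u_m-\lambda_m\bm u_m}\to0$ and $\lambda_m\to\lambda_0$; so for any prescribed small $\varepsilon>0$ there is an $m_0$ beyond which the hypotheses of \cref{prop: circulant conv pEve has spiky TFT} hold with that $\varepsilon$, and therefore $\abs{\mc{Q}_m(\bm u_m)-\alpha_0}<\delta+\pi\varepsilon^2$ for all $m\geq m_0$. Letting $\varepsilon\to0$ gives $\mc{Q}_m(\bm u_m)\to\alpha_0$, and the symmetric case gives $\mc{Q}_m(\bm u_m)\to-\alpha_0$ when $\alpha_0<0$, which is precisely $\alpha_0=\pm\lim_{m\to\infty}\mc{Q}_m(\bm u_m)$. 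I do not expect a genuine obstacle: the only subtlety worth stating carefully is the order of quantifiers in this last step — the proposition is an ``$\varepsilon$ fixed, $m$ large'' statement, and one has to combine it correctly with the definition of the limit in $m$ — together with the observation already noted that only $\abs{\alpha_0}$ is recoverable, which is what the $\pm$ encodes.
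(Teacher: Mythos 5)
Your proof is correct and is exactly the deduction the paper leaves implicit: the paper gives no proof for this corollary and expects it to follow directly from \cref{prop: circulant conv pEve has spiky TFT}, which is what you do. The splitting of the weighted average over $Y^*_{m,\delta}(\alpha_0)$ and its complement, the $\delta$ and $\pi\varepsilon^2$ bounds, the observation that $\mc{Q}_m$ can only recover $\abs{\alpha_0}$, and the careful handling of the quantifier order when passing to the limit are all sound.
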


\section{Periodic structures} \label{sec:5}
In this section, we will apply the newly defined truncated Floquet-Bloch transform from \cref{sec: DFT} to spatially periodic physical systems and show that as the size of the system grows the truncated Floquet-Bloch transform allows us to recover the band structure accurately and efficiently. We first illustrate a simpler system (a single resonator with only nearest neighbour interactions) and then provide a full generalisation of our result.

It has been shown \cite[Theorem 4.1.]{ammari.davies.ea2023Spectral} that any eigenmode of the infinite structure has a corresponding approximation in its large, finite counterpart. Furthermore, the bulk of eigenmodes of a finite system are such approximations of the infinite modes.
\begin{proposition}\label{prop: erik bryn spectral converngence}
    Let $e^{\i\alpha}\in \mathbb{T}^1$ and $f\in\setsymbols$. Consider an eigenpair $(\lambda(e^{\i\alpha}),\bm u(e^{\i\alpha}))$ of $f(e^{\i\alpha})$. Then, there exits a sequence $(m,j_m)\subset \N^2$ and eigenpairs $T_{m}(f)\bm u_{(m,j_m)} = \lambda_{(m,j_m)} \bm u_{(m,j_m)}$ so that
    \begin{align}
    \label{eq: convergence of eva and eve to quasiperiodic extension eve}
        \lambda_{(m,j_m)} \to \lambda(e^{\i\alpha}) \quad \text{and}\quad \Vert \bm u_{(m,j_m)} - QP_m(\bm u(e^{\i\alpha}),\alpha)\Vert \to 0 \quad\text{as } m\to\infty.
    \end{align}
    Here, $QP_m(\bm u(e^{\i\alpha}),e^{\i\alpha})$ denotes the quasiperiodic extension from \cref{def: quasiperiodic extension}. 
\end{proposition}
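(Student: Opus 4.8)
Since this is essentially \cite[Theorem 4.1]{ammari.davies.ea2023Spectral}, the shortest proof is a citation; what follows is the self-contained plan I would otherwise carry out. Everything hinges on one object, the truncated Bloch mode $\bm w_m := QP_m(\bm u(e^{\i\alpha}),\alpha)\in\C^{mk}$, which is a unit vector, together with the claim that it is an \emph{approximate} eigenvector of $T_{mk}(f)$ with approximate eigenvalue $\lambda(e^{\i\alpha})$; from this, self-adjoint perturbation theory delivers the genuine eigenpairs $(\lambda_{(m,j_m)},\bm u_{(m,j_m)})$.

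To establish the approximate-eigenvector property I would first treat a banded symbol $f=f^{[r]}$: a short blockwise computation shows that the $s$-th block of $T_{mk}(f)\bm w_m$ equals $\tfrac{e^{\i\alpha s}}{\sqrt m}\,f(e^{-\i\alpha})\bm u(e^{\i\alpha})$ for every bulk index $r\le s\le m-1-r$, and by the Hermitian symmetry in \cref{assumptions} together with the band symmetry $\lambda_p(e^{\i\alpha})=\lambda_p(e^{-\i\alpha})$ this is $\lambda(e^{\i\alpha})$ times the $s$-th block of $\bm w_m$ (the $e^{\pm\i\alpha}$ bookkeeping being absorbed by the band symmetry; for the real symbols of the applications $f(e^{\i\alpha})=f(e^{-\i\alpha})$ and there is nothing to do). Only the $O(r)$ blocks near the two ends carry a mismatch, of size $O(1/\sqrt m)$ each, so $\|(T_{mk}(f)-\lambda(e^{\i\alpha}))\bm w_m\|=O(\sqrt{r/m})\to 0$. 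For a general $f\in\setsymbols$ I would pass to the banded approximation: the regularity in \cref{assumptions} forces $\sum_{s}\|a_s\|<\infty$, hence $\|T(f)-T(f^{[r]})\|\le\sum_{|s|>r}\|a_s\|\to 0$ uniformly in $m$, while the $p$-th band value and eigenvector of $f^{[r]}(e^{\i\alpha})$ converge to those of $f(e^{\i\alpha})$ (uniqueness from the no-band-crossing hypothesis); a diagonal argument then produces unit vectors $\bm w_m$ with $\|(T_{mk}(f)-\lambda(e^{\i\alpha}))\bm w_m\|\to 0$ and $\|\bm w_m-QP_m(\bm u(e^{\i\alpha}),\alpha)\|\to 0$.

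Writing $\eta_m := \|(T_{mk}(f)-\lambda(e^{\i\alpha}))\bm w_m\|$, I would then apply \cref{lem:eigenspaceconv} to the Hermitian matrix $T_{mk}(f)$ with $\varepsilon=\sqrt{\eta_m}$. This yields an eigenvalue $\lambda_{(m,j_m)}$ of $T_{mk}(f)$ with $|\lambda_{(m,j_m)}-\lambda(e^{\i\alpha})|\le\eta_m\to 0$ and shows that the orthogonal projection of $\bm w_m$ onto the span of the eigenvectors whose eigenvalue lies in $[\lambda(e^{\i\alpha})-\sqrt{\eta_m},\lambda(e^{\i\alpha})+\sqrt{\eta_m}]$ has norm exceeding $\sqrt{1-\eta_m}$. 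This settles the eigenvalue half of the statement (and re-derives $\lambda(e^{\i\alpha})\in\sigma_{ess}(T(f))$).

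Pinning down $j_m$ so that $\|\bm u_{(m,j_m)}-QP_m(\bm u(e^{\i\alpha}),\alpha)\|\to 0$ is the real difficulty, and here I would follow the finer analysis of \cite[Theorem 4.1]{ammari.davies.ea2023Spectral}. The obstruction is twofold. First, the density of states of $T_{mk}(f)$ near an interior band value is of order $m$ (a classical density-of-states estimate), so the spectral window on which $\bm w_m$ concentrates still carries $\to\infty$ eigenvalues and cannot be shrunk to isolate a single eigenvector by soft means. Second, the symmetry $\lambda_p(e^{\i\alpha})=\lambda_p(e^{-\i\alpha})$ means that window is fed by quasiperiodicities near \emph{both} $+\alpha$ and $-\alpha$, so one should allow for superposition of $QP_m(\bm u(e^{\i\alpha}),\alpha)$ with its reflection $QP_m(\bm u(e^{-\i\alpha}),-\alpha)$ --- innocuous for the transform, since $\mathcal Q_m$ returns $|\alpha|$ on any such combination, exactly as in \cref{ex:tfbt for circulant eves}. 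The remaining work is to show that near $\lambda_p(e^{\i\alpha})$ the eigenvectors of $T_{mk}(f)$ are, to leading order, $QP_m$-type vectors indexed by quasiperiodicities close to $\pm\alpha$ --- this is where items (iii)--(v) of \cref{assumptions} (no band crossings, $C^1$ bands, no van der Hove points) enter, making $\lambda_p$ locally invertible with controlled derivative --- after which a compactness/uniqueness argument fixes the branch matching the prescribed polarity of $\bm u(e^{\i\alpha})$. I expect this last step to be the main obstacle.
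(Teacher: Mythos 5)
The paper gives no proof of this proposition: it is introduced by the sentence ``It has been shown \cite[Theorem 4.1.]{ammari.davies.ea2023Spectral} that any eigenmode of the infinite structure has a corresponding approximation in its large, finite counterpart'' and is then simply quoted from that reference. Your opening sentence --- that the shortest proof is a citation --- is therefore exactly the paper's route, and the rest of your submission is a self-contained sketch going beyond what the paper actually records.

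The sketch is sound for the eigenvalue half: $QP_m(\bm u(e^{\i\alpha}),\alpha)$ is an $O(\sqrt{r/m})$-approximate eigenvector of $T_{mk}(f^{[r]})$, the banded approximation removes the $r$-dependence via $\sum_s\|a_s\|<\infty$, and \cref{lem:eigenspaceconv} then delivers a genuine nearby eigenvalue. The caveat you flag about the eigenvector half is genuine and worth recording: in the $k=1$ tridiagonal case that the paper later treats explicitly in \cref{prop: DFT gives two diracts on tridiagonal toeplitz and 1D cap mat}, the true eigenvectors of $T_m(f)$ are $\sin$ waves, which are equal-weight superpositions of $e^{\i\alpha q}$ and $e^{-\i\alpha q}$, so the distance from a pure $QP_m(\bm u(e^{\i\alpha}),\alpha)$ does \emph{not} tend to zero (it stays bounded below by roughly $1/\sqrt 2$ for any choice of phase). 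The displayed convergence must therefore be read modulo the $\pm\alpha$ degeneracy, exactly as you suggest when you propose allowing a superposition of $QP_m(\bm u(e^{\i\alpha}),\alpha)$ and its reflection. That is a defect in the statement as the paper transcribes it, not in your proposal, and your deference to the cited reference for the fine eigenvector analysis is appropriate.
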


We aim to show that \cref{prop: erik bryn spectral converngence} can be reversed through the truncated Floquet-Bloch transform by recovering the limiting quasiperiodicity from a finite eigenmode.

\subsection{Single resonator or particle with only nearest neighbour interactions}\label{subsec: 1D periodic monomers}
Structures obtained by repeating periodically a single resonator or a particle keeping only nearest neighbour interactions are modelled with a Toeplitz matrix of the following form: 

\begin{align}\label{eq: form cap mat 1D}
    \mathcal{C}_m = \begin{pmatrix}
        a_0 + a_{-1}   & a_1   & 0     & 0     & \cdots & 0     & 0 \\
a_{-1} & a_0   & a_1   & 0     & \cdots & 0     & 0 \\
0      & a_{-1} & a_0   & a_1   & \cdots & 0     & 0 \\
0      & 0     & \ddots & \ddots   & \ddots & 0     & 0 \\
\vdots & \vdots & \ddots & \ddots & \ddots & \ddots & \vdots \\
0      & 0     & 0     & 0     & \ddots & a_0   & a_1 \\
0      & 0     & 0     & 0     & \cdots & a_{-1} & a_0 + a_{1} \\
    \end{pmatrix} \in \R^{m\times m} .
\end{align}

 This is the case in one-dimensional systems of resonators \cite{feppon.ammari2022Subwavelength, ammari.barandun.ea2023Edge} but also a common approximation in condensed matter physics; see, for instance, \cite{cbms2}.

The following result holds. 
\begin{proposition}\label{prop: DFT gives two diracts on tridiagonal toeplitz and 1D cap mat}
    Let $M_m\in\C^{m\times m}$ be any of $T_m(f)$ for $f(z)=a_{-1} z^{-1}+a_0 + a_1z^1$ or $\mathcal{C}_m$ from \eqref{eq: form cap mat 1D}. Consider a sequence of eigenpairs $(\lambda_{m},\bm u_{m})$ of $M_m$ so that $\lambda_{m} \to \lambda\in\R$ as $m\to\infty$.
    Then, $\lambda = f(e^{\i\alpha_0})$ with
    \begin{align*}
    \alpha_0 = \pm \lim_{m\to\infty} \mathcal{Q}_m(\bm u_{m}).
    \end{align*}
\end{proposition}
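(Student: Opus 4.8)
\emph{Setup and explicit spectra.} The plan is to do everything by hand, exploiting the tridiagonal structure. Under the standing assumptions (\cref{assumptions}) the scalar symbol $f(z)=a_{-1}z^{-1}+a_0+a_1z$ is Hermitian, hence real-valued on $\TO$, and since $(v)$ rules out van der Hove points away from $\pm1$ this forces $a_0\in\R$ and $a_1=a_{-1}=:a\in\R\setminus\{0\}$; in particular both $T_m(f)$ and $\mathcal C_m$ are real symmetric tridiagonal matrices, with simple spectra classically known in closed form. Writing $\gamma^{(m)}_j:=\tfrac{j\pi}{m+1}$ $(1\le j\le m)$ in the $T_m(f)$ case and $\gamma^{(m)}_j:=\tfrac{j\pi}{m}$ $(0\le j\le m-1)$ in the $\mathcal C_m$ case, the eigenvalues are $\mu_j=a_0+2a\cos\gamma^{(m)}_j=f(e^{\i\gamma^{(m)}_j})$, with eigenvectors $(\bm u_j)^{(s)}\propto\sin\big(\gamma^{(m)}_j(s+1)\big)$, respectively $(\bm u_j)^{(s)}\propto\cos\big(\gamma^{(m)}_j(s+\tfrac12)\big)$. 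Expanding the $\sin$ or $\cos$ into exponentials, the normalised eigenvector is in either case a balanced superposition
\[
\bm u_j=c_+\,w_+ + c_-\,w_-,\qquad w_\pm:=\tfrac1{\sqrt m}\big(e^{\pm\i\gamma^{(m)}_j s}\big)_{s},\qquad \abs{c_+}=\abs{c_-},
\]
of the two Bloch vectors at quasiperiodicities $\pm\gamma^{(m)}_j$, which are precisely the two solutions of $f(e^{\i\alpha})=\mu_j$; this is the \enquote{two Diracs} of the name.

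\emph{Limiting quasiperiodicity.} Since $\lambda_m$ is a simple eigenvalue of $M_m$, we have $\lambda_m=a_0+2a\cos\gamma^{(m)}_{j_m}$ for a unique index $j_m$, and $\lambda_m\to\lambda$ gives $\cos\gamma^{(m)}_{j_m}\to\tfrac{\lambda-a_0}{2a}\in[-1,1]$, hence $\gamma^{(m)}_{j_m}\to\gamma_0:=\arccos\!\big(\tfrac{\lambda-a_0}{2a}\big)\in[0,\pi]$ and $\lambda=a_0+2a\cos\gamma_0=f(e^{\i\gamma_0})$. Thus $\lambda=f(e^{\i\alpha_0})$ holds with $\alpha_0=\gamma_0$ (equivalently $-\gamma_0$, since $f(e^{\i\gamma_0})=f(e^{-\i\gamma_0})$, which is what the $\pm$ in the statement records), and it remains to prove that $\mathcal Q_m(\bm u_m)\to\gamma_0$, taking $\norm{\bm u_m}=1$ as $\mathcal Q_m$ is defined on unit vectors.

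\emph{Concentration of the transform.} Since $\mc T^j(\bm u_m)$ is the $j$-th discrete Fourier coefficient of $\bm u_m$, we have $\norm{\mc T^j(\bm u_m)}^2=\abs{\langle\bm u_m,\bm\omega_{j,m}\rangle}^2$ and $\sum_{\alpha_j\in Y^*_m}\norm{\mc T^j(\bm u_m)}^2=1$. By the decomposition above, $\mc T(\bm u_m)$ is a combination of the two discrete Dirichlet kernels centred at $\pm\gamma^{(m)}_{j_m}$; the identity $\big|\sum_{s=0}^{m-1}e^{\i\phi s}\big|^2=\sin^2(m\phi/2)/\sin^2(\phi/2)$ shows that a unit Bloch vector $w=\tfrac1{\sqrt m}(e^{\i\gamma s})_s$ carries $\ell^2$-mass only $O\!\big(\tfrac1{m\eta}\big)$ outside the $\eta$-arc of $Y^*_m$ around $\gamma$, for each fixed $\eta>0$. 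Applying this to $w_\pm$ and summing, $\sum_{\alpha_j\notin S_\eta}\norm{\mc T^j(\bm u_m)}^2\to0$, where $S_\eta\subset Y^*_m$ is the union of the $\eta$-arcs around $\pm\gamma^{(m)}_{j_m}$. On $S_\eta$ one has $\gamma^{(m)}_{j_m}-\eta\le\abs{\alpha_j}\le\gamma^{(m)}_{j_m}+\eta$, while always $\abs{\alpha_j}\le\pi$, so
\[
\big(\gamma^{(m)}_{j_m}-\eta\big)\big(1-o(1)\big)\;\le\;\mathcal Q_m(\bm u_m)\;\le\;\big(\gamma^{(m)}_{j_m}+\eta\big)+\pi\,o(1).
\]
Letting $m\to\infty$ and then $\eta\to0$ yields $\mathcal Q_m(\bm u_m)\to\gamma_0$, which is the claim. (When $e^{\i\gamma_0}\neq\pm1$ one may shortcut this: from $\norm{\bm u_m}_\infty=O(m^{-1/2})\to0$ and the fact that $C_m(f)-M_m$ is supported in the four corner entries, $(\lambda_m,\bm u_m)$ is a convergent sequence of pseudo-eigenpairs of the circulant $C_m(f)$, so \cref{cor: Q recovers quasifrequency in circulant} applies verbatim; only the band-edge cases $\gamma_0\in\{0,\pi\}$ then still need the Dirichlet estimate above.)

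\emph{Main obstacle.} Everything reduces to that last point: the discrete Fourier transform of an off-grid pure frequency $(e^{\i\gamma s})_s$ is a spread-out Dirichlet kernel rather than a clean spike, and one has to check quantitatively that it nonetheless concentrates a fraction $1-O(1/(m\eta))$ of its $\ell^2$-norm within $\eta$ of $\gamma$. This localisation is exactly what upgrades the weak/pointwise information available from the general theory (\cref{prop: weak conv of eigenvectors to essential spectrum}, \cref{cor:general delocalization}) to the weighted-average convergence of $\mathcal Q_m$; the circulant shortcut trades it for the trivial observation that the corner perturbation $C_m(f)-M_m$ is washed out by delocalisation, at the price of missing the band edges.
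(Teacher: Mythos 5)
Your proof is correct and follows the same overall strategy as the paper: write down the explicit eigenpairs of the (real symmetric) tridiagonal matrix, recognise the eigenvector as a truncated sinusoid (equivalently, a balanced pair of Bloch exponentials at $\pm\gamma^{(m)}_{j_m}$), and show that the DFT mass of such a vector concentrates around $\pm\gamma^{(m)}_{j_m}\to\gamma_0$ so that the weighted average $\mathcal Q_m\to\gamma_0$. Where you differ from the paper is purely in the concentration estimate: the paper passes to the continuum and computes the Fourier coefficients $g_{s,n}$ of $\sin(\pi s x)$ on $[0,1)$, splitting into the even case ($s=2r$, exactly two Diracs) and the odd case ($s=2r+1$, quadratically decaying tail, then a ``bounding argument''), whereas you work entirely discretely, bounding the off-peak $\ell^2$-mass of the Dirichlet kernel by $O(1/(m\eta))$ and then squeezing. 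Your version is a bit cleaner in that it avoids the implicit continuum-to-discrete approximation and treats all $s$ uniformly, and your concluding remark that the corner-perturbation/circulant shortcut (via \cref{cor: Q recovers quasifrequency in circulant}) covers only $e^{\i\gamma_0}\neq\pm1$, so that the Dirichlet-kernel estimate is genuinely needed at the band edges, is an accurate observation that the paper leaves implicit.
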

\begin{proof}
    It is well-known that the eigenpairs of $T_m(f)$ are given by
    \begin{align*}
        \lambda_{(m,s)} &= a_0 + 2 a_1\cos\left(\frac{s\pi}{m+1}\right), \quad 1\leq s\leq m,\\
        \bm u_{(m,s)}^{q} &=\kappa_m \sin\left(q\frac{s\pi}{m+1}\right), \quad 1\leq q,s\leq m,
    \end{align*}
    where $\kappa_m$ ensures that $\bm u_{(m,s)}$ is normalised.
    Denoting 
    \begin{align*}
    g_s:[0,1)&\to\R\\
    x&\mapsto \sin(\pi x s),\quad s\in\N,
    \end{align*}
    a straightforward computation yields $g_s = \sum_{n\in\Z} g_{s,n}e^{\i n x}$ with 
    \begin{align}
    \label{eq: gsn 2n}
        g_{s,n} = \begin{dcases}
        -\frac{1}{2\i}, &n=\pm r,\\
        0& else, 
        \end{dcases}
    \end{align}
    if $s=2r$ or
    \begin{align}
    \label{eq: gsn 2n+1}
        g_{s,n} = \frac{1}{\pi}\left(\frac{1}{2(r-n)+1}+\frac{1}{2(r+n)+1}\right)
    \end{align}
    if $s=2r+1$. In the first case $s=2r$, we trivially get $\mathcal{Q}_m(\bm u_{(m,s)})=\pi s/m$. On the other hand,  for $s=2r+1$, $\vert g_{s,n}\vert^2$ has a peak for $n=\pm r$ and decays quadratically from there. It follows therefore from a bounding argument that $\lim_{m\to\infty} \mathcal{Q}_m(\bm u_{(m,s)})=\pi \lim_{m\to\infty}s/m$ in this case.

    On the other hand, we remark that since $a_{-1}=a_1$, $f$ takes the form $f(e^{i\theta})=a_0 + 2 a_1\cos(\theta)$ and
    \begin{align*}
        \lambda_{(m,s)}\to a_0+2a_1\cos(\alpha_0) = f(e^{\i\alpha_0}),
    \end{align*}
    with $\alpha_0 = \pi \lim_{m\to\infty} s/m$, which concludes the proof for $T_m(f)$. For $C_m$, one may observe that, using the explicit formulas from \cite{ammari.barandun.ea2024Mathematical} and the linearity of $\mathcal{F}$, the argument presented above can be repeated as is.
\end{proof}
We present in \cref{fig:DFT_1D} the band structure reconstruction for systems of dimers with only nearest neighbour interactions. The reader may notice that every even quasiperiodicity is exactly reconstructed already in the $m=20$, while for the odd quasiperiodicities, we need larger system sizes to accurately reconstruct them.  This phenomenon is due to the difference between the decay patterns of \cref{eq: gsn 2n} and \cref{eq: gsn 2n+1}.
\begin{figure}[h]
    
    \centering
    \begin{subfigure}[t]{0.32\textwidth}
        \centering
        \includegraphics[width=\textwidth]{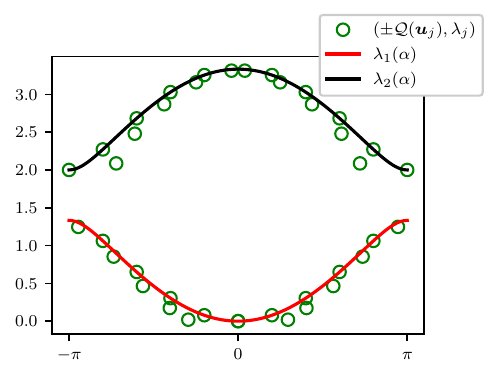}
    \caption{$m=20$}\end{subfigure}\hfill\begin{subfigure}[t]{0.32\textwidth}
        \centering
        \includegraphics[width=\textwidth]{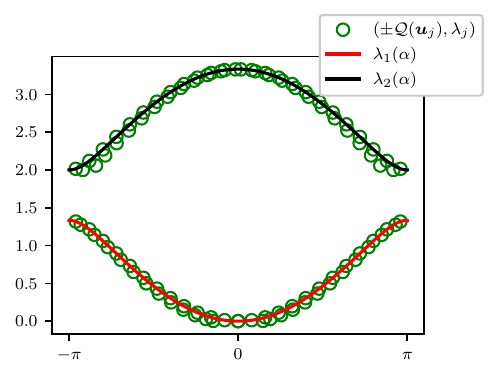}
        \caption{$m=50$}
    \end{subfigure}\hfill
    \begin{subfigure}[t]{0.32\textwidth}
        \centering
        \includegraphics[width=\textwidth]{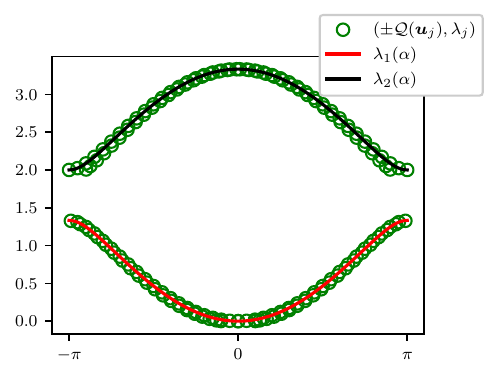}
        \caption{$m=80$}
    \end{subfigure}
    \caption{Band structure reconstruction for systems with only nearest neighbour interactions (eigenpairs of $\mathcal{C}_m$ from \eqref{eq: form cap mat 1D}). The solid lines show the traces of the symbol's eigenvalues (\emph{i.e.}, the actual periodic bands) while the green circles show the reconstructed discrete bands.}
    \label{fig:DFT_1D}
\end{figure}
\subsection{Systems of resonators with long-range interactions}
In \cref{subsec: 1D periodic monomers}, we have presented for the convenience of the reader the simplest possible system and an explicit proof for the reconstruction of the band structure though finite systems using the truncated Floquet-Bloch transform. Now, we will present our results in full generality.
\begin{theorem}\label{thm: TFT general}
     Let $f(z)=\sum_{n\in\Z}a_n z^n\in\setsymbols$ be such that $a_n = \mathcal{O}(n^{-p})$ for some $p>1$. Consider a sequence of eigenpairs $(\lambda_{m},\bm u_{m})$ of $T_{mk}(f)$ so that $\lambda_{m} \to \lambda\in\R$ as $m\to\infty$ with $\lambda=\lambda_p(e^{\i\alpha_0}) \in \sigma_{ess}(T(f))$ for $e^{\i\alpha_0}\neq \pm 1$ and $\norm{\bm u_m}=1$.
    Then 
    \begin{align*}
    \alpha_0 = \pm \lim_{m\to\infty} \mathcal{Q}_m(\bm u_{m}).
    \end{align*}
\end{theorem}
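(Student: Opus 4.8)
The plan is to deduce the statement from the circulant case by a double approximation. First, replace $T_{mk}(f)$ by its $r$-banded approximation $T_{mk}(f^{[r]})$: since $a_n=\mathcal{O}(n^{-p})$ with $p>1$, the tail $\sum_{\abs{n}>r}\norm{a_n}$ is finite and tends to $0$, so $\norm{T_{mk}(f)-T_{mk}(f^{[r]})}\leq\sum_{\abs{n}>r}\norm{a_n}$ is small once $r$ is large, uniformly in $m$. Second, replace the banded Toeplitz matrix $T_{mk}(f^{[r]})$ by the associated $k$-circulant matrix $C_{mk}(f^{[r]})$ with symbol $f^{[r]}$; the difference $C_{mk}(f^{[r]})-T_{mk}(f^{[r]})$ is supported on the $\mathcal{O}(rk)$ entries in the top-right and bottom-left corners, whence $\norm{(C_{mk}(f^{[r]})-T_{mk}(f^{[r]}))\bm u_m}\leq C_r\sqrt{rk}\,\norm{\bm u_m}_\infty$. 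Here the key input is delocalisation: because $\lambda\in\sigma_{ess}(T(f))$, \cref{cor:general delocalization} gives $\norm{\bm u_m}_\infty\to0$, so for each fixed $r$ this corner error vanishes as $m\to\infty$. Since $T_{mk}(f)\bm u_m=\lambda_m\bm u_m$ exactly, combining the two bounds shows that for any tolerance $\varepsilon>0$ we can first pick $r$ large (so the tail is below $\varepsilon^2/2$) and then $m$ large (so the corner error is below $\varepsilon^2/2$ and $\abs{\lambda_m-\lambda}<\varepsilon$), making $(\lambda_m,\bm u_m)$ an $\varepsilon^2$-pseudo-eigenpair of the circulant $C_{mk}(f^{[r]})$.

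Next I localise the part of the circulant spectrum that matters. Fix $\eta>0$. By the non-crossing assumption the band images $\lambda_q(\TO)$ are pairwise disjoint compact sets, so $\operatorname{dist}(\lambda,\lambda_q(\TO))>0$ for $q\neq p$; by the absence of van der Hove points $\lambda_p$ is strictly monotone on $(0,\pi)$, hence $\lambda$ is attained on $\TO$ exactly at $e^{\pm\i\alpha_0}$, with $\alpha_0\in(0,\pi)$ without loss of generality. Thus there is $\tau>0$, smaller than the band separation, such that $\abs{\lambda_p(e^{\i\alpha})-\lambda}<\tau$ forces $\abs{\abs{\alpha}-\alpha_0}<\eta$. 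Choose $r$ additionally large enough that $f^{[r]}\in\setsymbols$ and $\norm{\lambda_p^{[r]}-\lambda_p}_\infty\leq\sum_{\abs{n}>r}\norm{a_n}<\tau/2$ (Weyl's inequality), and $\varepsilon<\tau/4$. Then every eigenvalue of $C_{mk}(f^{[r]})$ at distance $\leq\varepsilon$ from $\lambda_m$ is, by \cref{prop:eigs of k-circulant}, of the form $\lambda_p^{[r]}(e^{\i\alpha_s})$ with $\alpha_s\in A_\eta:=\{\alpha_s\in Y^*_m:\abs{\abs{\alpha_s}-\alpha_0}<\eta\}$. Applying \cref{lem:eigenspaceconv} to the normal matrix $C_{mk}(f^{[r]})$ and the $\varepsilon^2$-pseudo-eigenpair $(\lambda_m,\bm u_m)$, I split $\bm u_m=\bm u_\parallel+\bm u_\perp$, where $\bm u_\parallel$ lies in the span of $\{QP_m(\bm u_p(e^{\i\alpha_s}),e^{\i\alpha_s}):\alpha_s\in A_\eta\}$ and $\norm{\bm u_\perp}<\varepsilon$.

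Now the computation of \cref{ex:tfbt for circulant eves} finishes the job: there $\norm{\mc{T}^j[QP_m(\bm u_p(e^{\i\alpha_s}),e^{\i\alpha_s})]}^2=\delta_{js}$, so $\mc{T}^j(\bm u_\parallel)=\bm 0$ whenever $\alpha_j\notin A_\eta$, and for such $j$ one has $\mc{T}^j(\bm u_m)=\mc{T}^j(\bm u_\perp)$. Unitarity of $\mc{T}$ then yields $\sum_{\alpha_j\in Y^*_m\setminus A_\eta}\norm{\mc{T}^j(\bm u_m)}^2=\sum_{\alpha_j\in Y^*_m\setminus A_\eta}\norm{\mc{T}^j(\bm u_\perp)}^2\leq\norm{\bm u_\perp}^2<\varepsilon^2$. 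Letting $m\to\infty$ and then $\varepsilon\to0$ gives $\sum_{\alpha_j\in Y^*_m\setminus A_\eta}\norm{\mc{T}^j(\bm u_m)}^2\to0$. Since $\sum_{\alpha_j\in Y^*_m}\norm{\mc{T}^j(\bm u_m)}^2=\norm{\bm u_m}^2=1$, while $\abs{\abs{\alpha_j}-\alpha_0}<\eta$ on $A_\eta$ and $\abs{\abs{\alpha_j}-\alpha_0}<\pi$ always, we obtain $\abs{\mathcal{Q}_m(\bm u_m)-\alpha_0}\leq\eta+\pi\sum_{\alpha_j\in Y^*_m\setminus A_\eta}\norm{\mc{T}^j(\bm u_m)}^2$, so $\limsup_{m\to\infty}\abs{\mathcal{Q}_m(\bm u_m)-\alpha_0}\leq\eta$. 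As $\eta>0$ is arbitrary, $\mathcal{Q}_m(\bm u_m)\to\alpha_0=\abs{\alpha_0}$, i.e. $\alpha_0=\pm\lim_{m\to\infty}\mathcal{Q}_m(\bm u_m)$.

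The main obstacle is the bookkeeping of the nested limits: one fixes $\eta$, then the tolerance $\varepsilon$, then $r$ (large enough both for the band-truncation/corner errors and for $f^{[r]}$ to inherit \cref{assumptions}), and only then sends $m\to\infty$; it is crucial that the corner correction vanishes for each fixed $r$, which is precisely the content of the delocalisation result \cref{cor:general delocalization} and is why $\lambda\in\sigma_{ess}(T(f))$ cannot be dropped. Note also that, this way, only absolute summability of $(a_n)$ is used — no quantitative control of the width of $A_\eta$ in terms of $\lambda_p'$ is needed. A secondary point is that one recovers only $\abs{\alpha_0}$, reflecting the symmetry $\lambda_p(e^{\i\alpha})=\lambda_p(e^{-\i\alpha})$, which is the source of the $\pm$ in the statement.
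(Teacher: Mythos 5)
Your proof is correct, and at the highest level it follows the same route as the paper: replace $T_{mk}(f)$ by the banded approximation $T_{mk}(f^{[r]})$, then pass to the circulant $C_{mk}(f^{[r]})$ using delocalisation (\cref{cor:general delocalization}) to control the corner correction, and finally exploit the fact that circulant eigenvectors are quasiperiodic extensions whose \tftn is a Kronecker delta. The genuine difference is in the last stage. The paper finishes by invoking \cref{cor: Q recovers quasifrequency in circulant}, whose underlying machinery (\cref{prop: circulant conv pEve has spiky TFT} and \cref{lem:bandinverse}) controls the spectral window $Y^*_{m,\delta}(\alpha_0)$ quantitatively via the Lipschitz constant $\kappa = 2\lvert\lambda_p'(\expi{\alpha_0})\rvert^{-1}$. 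You instead reprove this step from scratch with a purely qualitative argument: disjointness of the band images plus strict monotonicity away from $\pm1$ furnish, for every tolerance $\eta>0$, a $\tau>0$ that traps the relevant circulant eigenvalues in $A_\eta$, and the conclusion then follows from \cref{lem:eigenspaceconv} and \cref{ex:tfbt for circulant eves} together with a three-term error estimate on $\mathcal{Q}_m$. This buys two things. First, you avoid any explicit dependence on $\lambda_p'$, at the cost of a non-constructive $\tau$; for someone who wants a rate this is a loss, but for the limit statement it is a genuine simplification. Second — and this is a point worth highlighting — your use of Weyl's inequality to control $\lVert\lambda_p^{[r]}-\lambda_p\rVert_\infty$ makes explicit a step that the paper's proof leaves implicit: \cref{cor: Q recovers quasifrequency in circulant} is stated for a fixed symbol, so when applied to $f^{[r]}$ it recovers the preimage of $\lambda$ under $\lambda_p^{[r]}$, not under $\lambda_p$; one still has to argue these preimages agree in the limit $r\to\infty$. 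Your $\eta$/$\tau$/$\varepsilon$/$r$ bookkeeping handles this cleanly and makes the nested limits airtight, where the paper compresses the diagonal argument.
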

We remark that the assumption on the limit of $\lambda_{m}$ and $\bm u_{m}$ is rather weak. A well-established result, see \cref{prop: spectrum of block toeplitz is intervals and finite number of points}, states that the spectrum of a Toeplitz operator is composed out of the essential spectrum and at most a site of finitely many isolated eigenvalues. 
\begin{proof}[Proof of \cref{thm: TFT general}]
    The proof is now the combination of various steps prepared in the previous sections. Let $\epsilon>0$ and consider an eigenpair $T_m(f)\bm u_m = \lambda_m \bm u_m$. Then, by the assumption on the decay of the Fourier coefficients of $f$, we can approximate $T_m(f)$ by $T_m(f^{[r]})$ for some $r$. Specifically,  \cref{lemma: eve of full are pseudoeve of banded} holds for $f$ and thus we may choose $m$ large enough so that 
    \begin{align*}
        \Vert T_m(f^{[r]})\bm u_m - \lambda_m \bm u_m \Vert < \epsilon.
    \end{align*}
    Note now that by \cref{cor:general delocalization}, $\bm u_m$ must be delocalised, i.e. $\Vert \bm u_m\Vert_\infty \to 0$. Furthermore, $(C_m(f^{[r]})$ and $T_m(f^{[r]}))$ differ only in a fixed number, say $K_1$, of entries (independent of $m$) uniformly bounded by $K_2$. By possibly increasing $m$, we can assume that $K_1K_2\Vert \bm u_m\Vert_\infty<\epsilon$. Then,
    \begin{align*}
        & \Vert C_m(f^{[r]}) \bm u_m - \lambda_m \bm u_m \Vert = \Vert (C_m(f^{[r]}) - T_m(f^{[r]}))\bm u_m + T_m(f^{[r]})\bm u_m - \lambda_m\bm u_m \Vert \\&\leq \Vert (C_m(f^{[r]}) - T_m(f^{[r]})) \bm u_m\Vert + \Vert T_m(f^{[r]})\bm u_m - \lambda_m\bm u_m \Vert < K_1K_2\Vert \bm u_m\Vert_\infty + \varepsilon < 2\epsilon.
    \end{align*}
    As a last step, we can apply \cref{cor: Q recovers quasifrequency in circulant} to conclude the proof.
\end{proof}

We elucidate the result of \cref{thm: TFT general} with two applications. The first one, presented in \cref{fig: band reconstructed 3D cap mat}, reconstructs the band structure of the capacitance matrix associated to a one-dimensional chain of $100$ dimers in three dimensions. This should be compared to \cite[Figure 4]{ammari.davies.ea2023Spectral}.
\begin{figure}[!h]
    \includegraphics[width=0.75\textwidth]{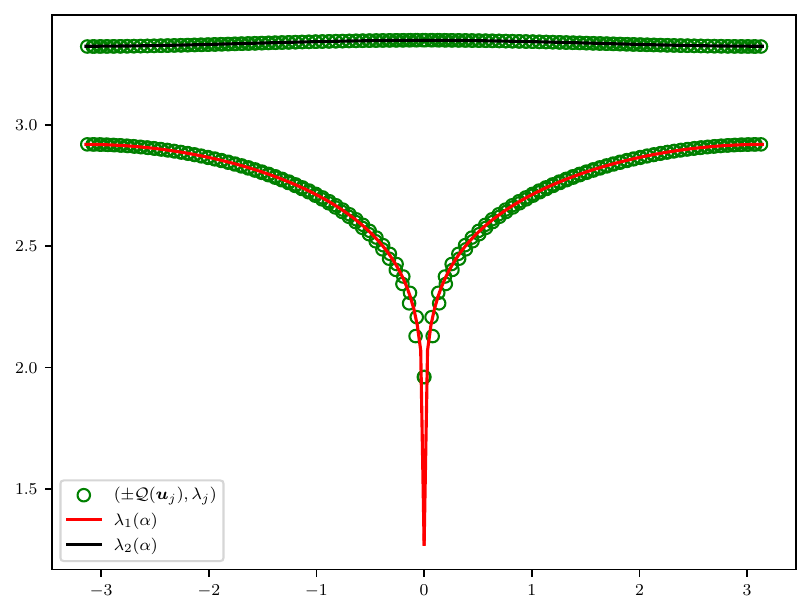}
    \caption{Band reconstruction for a three-dimensional dimer system composed of 100 resonators. Fourier coefficient of the matrix of interest decay slowly with approximately $\mathcal{O}(n^{-1.2})$. Nevertheless, the band structure is perfectly reconstructed.
    }
    \label{fig: band reconstructed 3D cap mat}
\end{figure}
\cref{fig: exponentially decaying symbol} shows instead the band reconstruction of $T_m(f)$ for
\begin{align}
\label{eq: exp decaying symbol}
    f(z) = \sum_{p\in \Z} \frac{-1}{2^{\vert p\vert}} z^p.
\end{align}
We observe a very fast reconstruction of the band structure, which is distinguishable already for $N=10$ and very well approximated for $N=30$.
\begin{figure}
\begin{subfigure}[t]{0.48\textwidth}
    \centering
    \includegraphics[width=\textwidth]{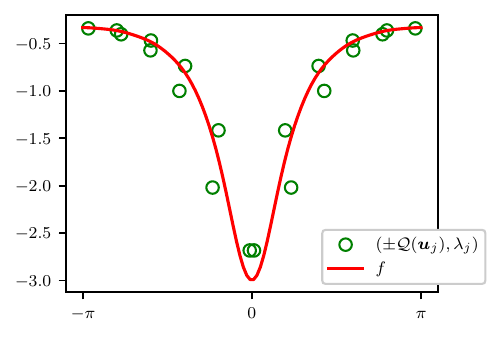}
    \caption{$N=10$}
\end{subfigure}\hfill\begin{subfigure}[t]{0.48\textwidth}
    \centering
    \includegraphics[width=\textwidth]{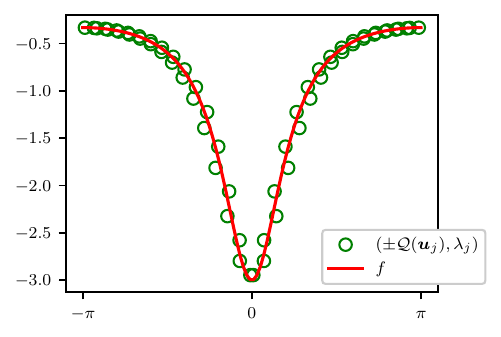}
    \caption{$N=30$}
\end{subfigure}
    \caption{Band reconstruction for $T_m(f)$ with $f$ as in \eqref{eq: exp decaying symbol} with exponentially decaying Fourier coefficients.}
    \label{fig: exponentially decaying symbol}
\end{figure}

\section{Aperiodic structures} \label{sec:6}
The Floquet-Bloch transform is a well-established tool to analyse periodic systems. It is, therefore, no wonder that its truncated counterpart works so well for periodic vectors as proved in the previous section. In this section, we show that its power extends far further by applying it to aperiodic structures, which are not encompassed by the standard Floquet-Bloch transform. We consider in particular three types of defected structures: SSH-like structures, dislocated structures, and structures with compact defects. All of these structures arise naturally from a periodic model (here a system of dimers). We will show that the truncated Floquet-Bloch transform recovers the band structure of the model and the localised eigenmodes inside the bandgap. 
\subsection{SSH-like structures}
In this subsection, we consider a model as the one studied in \cite{ammari.barandun.ea2024Exponentially}. This model is an SSH model, which supports one localised mode. The corresponding capacitance matrix is
given by 
\setcounter{MaxMatrixCols}{20}
\begin{align}
\label{eq: strucutre capacitance matrix ssh}
    \mathcal{C}_{\text{dSSH}} = \begin{pNiceMatrix}
    \Block[draw,fill=blue!40,rounded-corners]{7-7}{} \tilde{\alpha} & \beta_{1} &&&&&&&&&\\
\beta_{1} & \alpha & \beta_{2}  &&&&&&&&\\
& \beta_{2} & \alpha & \beta_{1}  &&&&&&&\\
       && \ddots     & \ddots     & \ddots     &&&&&&\\
       &&& \beta_{2} & \alpha & \beta_{1}  &&&&&\\
       &&&& \beta_{1} & \alpha & \beta_{2}  &&&&\\
       &&&&& \beta_{2} & \Block[draw,fill=red!40,rounded-corners]{7-7}{}\eta & \beta_{2}  &&&\\
       &&&&&& \beta_{2} & \alpha & \beta_{1}  &&\\
       &&&&& && \beta_{1} & \alpha & \beta_{2}  &\\
       &&&&&&&& \ddots     & \ddots & \ddots     &\\
       &&&&&&&&& \beta_{1} & \alpha & \beta_{2}  \\
       &&&&&&&&& & \beta_{2} & \alpha & \beta_{1}  \\
       &&&&&&&&&&& \beta_{1} & \tilde{\alpha}
    \end{pNiceMatrix}.
\end{align}
In \cite{ammari.barandun.ea2024Exponentially}, it shown that there exists exactly one eigenvector $\bm v\in \R^{4m+1}$ of $\mathcal{C}_{\text{dSSH}}\in \R^{4m+1\times 4m+1}$ which is exponentially localised, loosely meaning 
\begin{align}
\label{eq: eig dssh exp}
  \bm v^{(\vert 2m-j\vert)} \approx Ae^{-Bj}, \quad A \in \R, B >0,  
\end{align}
while all other eigenvectors $\bm u\in \R^{4m+1}$ behave like
\begin{align}
    \label{eq: eig dssh deloc}
    \bm u^{(\vert 2m-j\vert)} \approx A_1\sin(B_1 j)+A_2\cos(B_2 j), \quad A_1,A_2,B_1,B_2\in\R.
\end{align}
We refer the reader to \cite[Proposition 6]{ammari.barandun.ea2024Exponentially} for a more precise statement on the eigenvectors of $\mathcal{C}_{\text{dSSH}}$. One may remark that eigenvectors of the form \eqref{eq: eig dssh deloc} are approximately the eigenvectors of the unperturbed structure (see \cite{ammari.barandun.ea2023Perturbed}). So, it is clear that for these eigenvectors the result of \cref{thm: TFT general} holds. On the other hand, \cref{eq: eig dssh exp} is localised and thus \cref{thm: TFT general} does not hold. In \cref{fig: main ssh}, we illustrate how the reconstruction of the band structure works. In particular, \cref{fig: TFBP loc and deloc eve} shows how localised and delocalised modes are different in front of the truncated Floquet-Bloch. This difference can be explained with the \emph{uncertainty principle}: a vector and its Fourier transform cannot both be localised. Remark that the eigenvectors of $\mathcal{C}_{\text{dSSH}}$ cannot directly be fed into $\mathcal{Q}$ as the latter expects a vector of even dimension, while the SSH structure has an odd number of resonators due to the defect. We consequently zero-pad the eigenvectors before applying $\mathcal{Q}$.

\begin{figure}[h]
    \centering
    \hfill\begin{subfigure}[t]{0.32\textwidth}
        \centering
        \includegraphics[width=\textwidth]{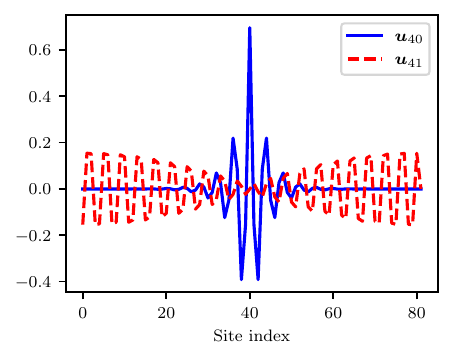}
        \caption{The localised eigenvector $\bm u_{40}$ and a delocalised eigenvector.}
        \label{fig: loc and deloc eve}
    \end{subfigure}\hfill
    \begin{subfigure}[t]{0.32\textwidth}
        \centering
        \includegraphics[width=\textwidth]{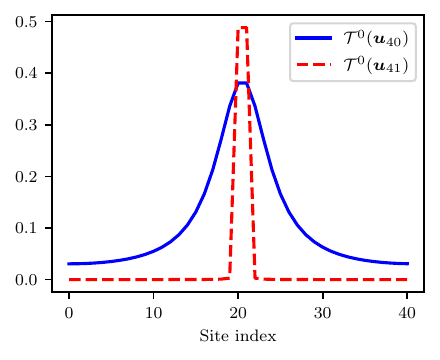}
        \caption{The truncated Floquet-Bloch projection for the eigenvectors of \cref{fig: loc and deloc eve}.}
        \label{fig: TFBP loc and deloc eve}
    \end{subfigure}\hfill\phantom{.}\\
    \begin{subfigure}[t]{0.5\textwidth}
        \centering
        \includegraphics[width=\textwidth]{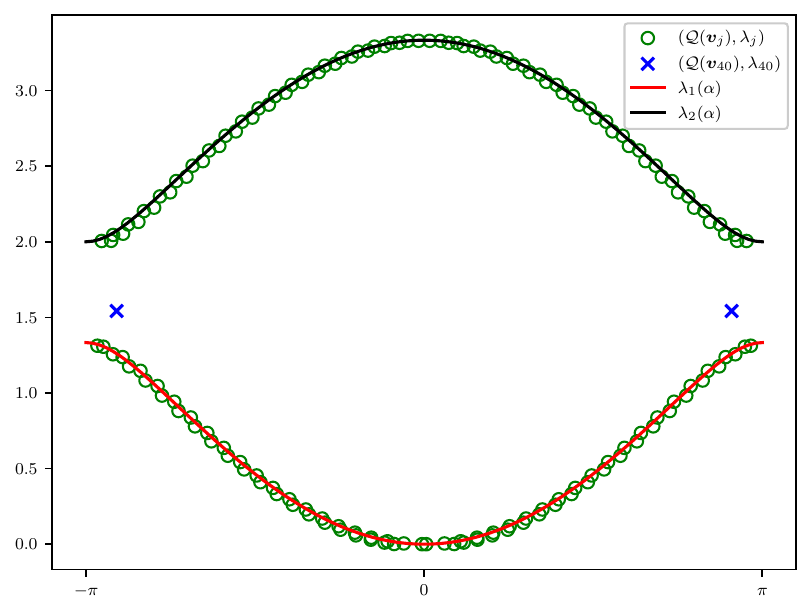}
        \caption{Reconstructed band structure and band structure of the periodic model.}
    \end{subfigure}
    \caption{The truncated Floquet-Bloch reconstructs the band structure associated to  \eqref{eq: strucutre capacitance matrix ssh}. Note that the band structure is the same as the one of the periodic structure from \cref{fig:DFT_1D} with the exception of the localised mode that stands out. Figures (\textsc{a}) and (\textsc{b}) illustrate the uncertainty principle.}
    \label{fig: main ssh}
\end{figure}
\subsection{Dislocated structures}
The second aperiodic system we consider is the one of a dislocated structure. This kind of structures, which have been widely studied \cite{ammari.davies.ea2022Robust,hempel.kohlmann2011variational,drouot.fefferman.ea2020Defect}, are composed by a dimer structure which gets separated into two structures with a distance $d$ between them; see \cref{fig: disolcated setup}. It is well-known that this dislocation generates a mid gap frequency associated to a topologically protected localised interface mode. We show in \cref{fig: dislocated band structure} that exactly as for SSH structures, also for dislocated structures the discrete Floquet-Bloch transform perfectly recovers the band structure of the originally periodic system and isolates the localised eigenmode.

\begin{figure}[h]
    \centering
    \begin{subfigure}[t]{0.58\textwidth}
        \centering
        \raisebox{1.4cm}{
\begin{tikzpicture}

\def\d{1.5} 
\def\v{-1} 

\node[circle, draw] at (0, 0) {};
\node[circle, draw] at (.5, 0) {};
\node[circle, draw] at (1.5, 0) {};
\node[circle, draw] at (2, 0) {};
\node[circle, draw] at (3, 0) {};
\node[circle, draw] at (3.5, 0) {};
\node[circle, draw] at (4.5, 0) {};
\node[circle, draw] at (5, 0) {};
\node[circle, draw] at (6, 0) {};
\node[circle, draw] at (6.5, 0) {};

    \begin{scope}[shift={(-.5,0)}]
\node[circle, draw] at (0, \v) {};
\node[circle, draw] at (.5, \v) {};
\node[circle, draw] at (1.5, \v) {};
\node[circle, draw] at (2, \v) {};
\node[circle, draw] at (3, \v) {};
    \end{scope}
    \begin{scope}[shift={(+.5,0)}]
\node[circle, draw] at (3.5, \v) {};
\node[circle, draw] at (4.5, \v) {};
\node[circle, draw] at (5, \v) {};
\node[circle, draw] at (6, \v) {};
\node[circle, draw] at (6.5, \v) {};
\end{scope}
\draw[->] (3.25,0) to[out=-90, in=0] (2.75,\v);
\draw[->] (3.25,0) to[out=-90, in=180] (3.75,\v);
\draw[|-|] (2.75,\v-.2) -- (3.75,\v-.2);
\node[anchor=south] (d) at (3.25,\v-.2) {$d$};
\end{tikzpicture}}
        \caption{Dislocation occurs in a system of dimers by increasing the intra-dimer spacing to an arbitrary distance $d$.}
        \label{fig: disolcated setup}
    \end{subfigure}\hfill
    \begin{subfigure}[t]{0.4\textwidth}
        \centering
        \includegraphics[width=\textwidth]{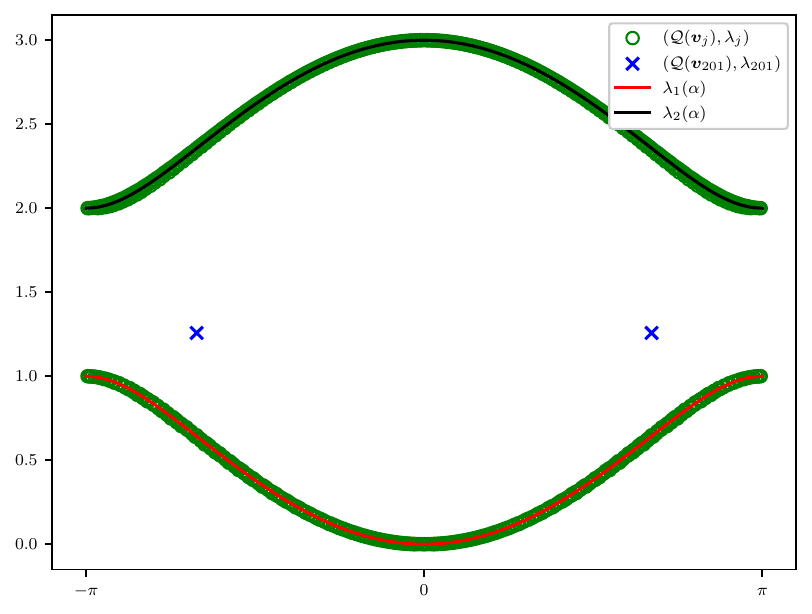}
        \caption{The truncated Floquet-Bloch transform effectively recovers the band structure of the original (periodic) system, here shown as red and black solid lines. Blue crosses show the quasiperiodicity and eigenvalue of the localised mode.}
        \label{fig: dislocated band structure}
    \end{subfigure}
    \caption{The truncated Floquet-Bloch transform can be applied to a dislocated system to recover the band structure of the underlying periodic system.}
    \label{fig: dislocation full figure}
\end{figure}

\subsection{Compact perturbations}
The last aperiodic system we consider is the one with compact perturbations obtained here by perturbing the material parameters of one resonator. If $C$ is the matrix associated to the unperturbed system, then in order to analyse the perturbed system we look at eigenvalues of $B C$, 
where 
\begin{align}
    \label{eq: defect matrix}
    B=\diag(1,\dots,1,1+\delta,1,\dots,1)
\end{align} with $\delta\in\R$. It is clear that the method introduced in \cite{ammari.davies.ea2024Anderson} can be used to analyse the eigenvectors of $BC$, showing that the eigenvectors associated to eigenvalues in the spectral bulk are delocalised and are approximately the ones of the defectless structure. However, the eigenvectors associated to eigenvalues in the gap are exponentially decaying. A full and rigorous analysis of the existence and uniqueness of eigenvalues in the gap is not in the scope of this paper but could be performed using the same tools as those in \cite{ammari.barandun.ea2024Exponentially}. Here, we are content with simply analysing the band structure, which we do in \cref{fig: compact defect}. In \cref{fig: eve negative compact defect,fig: negative compact defect}, one notices that in the presence of perturbations that do not generate eigenvalues in the gap, all eigenvectors are delocalised and the reconstruction of the periodic band structure is flawless. In \cref{fig: eve positiv compact defect,fig: positiv compact defect}, we look at perturbations that generate eigenvalues in the gap. Here the reconstruction of the band is successful as well, but the pairs $(\mathcal{Q}(\bm u), \lambda)$ associated to localised eigenvectors stand out again, due to the delocalised nature of the discrete Fourier transform of these vectors.
\begin{figure}[h]
    \centering
    \begin{subfigure}[t]{0.48\textwidth}
        \centering
        \includegraphics[width=\textwidth]{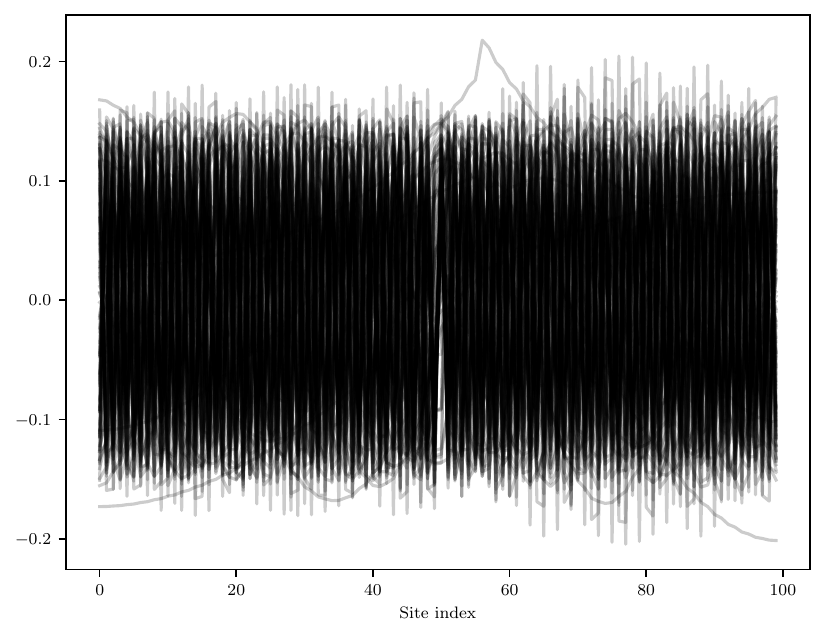}
        \caption{Negative defect, eigenmodes superimposed.}
        \label{fig: eve negative compact defect}
    \end{subfigure}\hfill
    \begin{subfigure}[t]{0.48\textwidth}
        \centering
        \includegraphics[width=\textwidth]{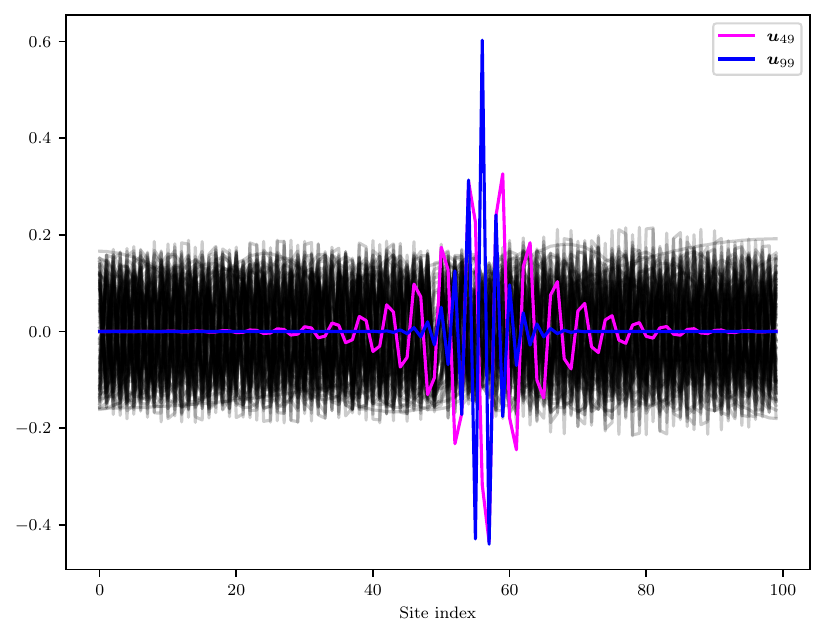}
        \caption{Positive defect, eigenmodes superimposed. Localised eigenmodes are highlighted with colours.}
        \label{fig: eve positiv compact defect}
    \end{subfigure}
    \begin{subfigure}[t]{0.48\textwidth}
        \centering
        \includegraphics[width=\textwidth]{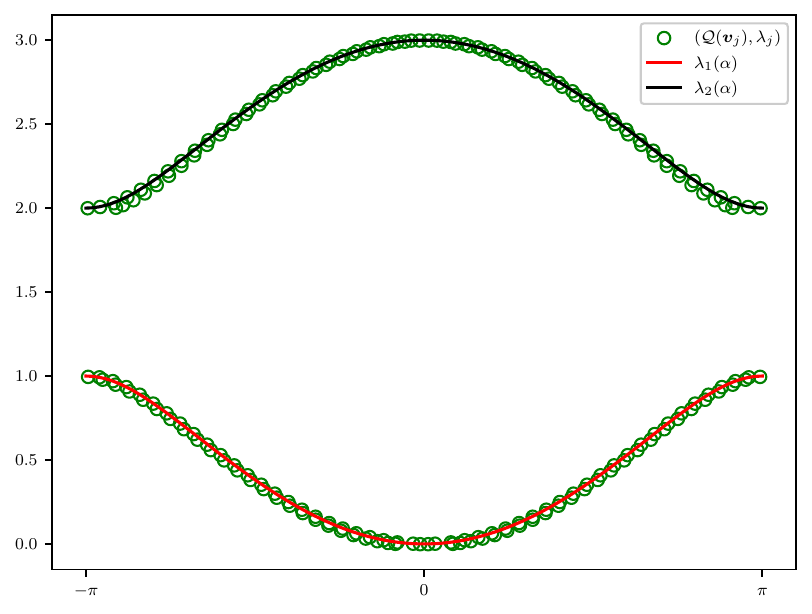}
        \caption{Negative defect, reconstructed band structure.}
        \label{fig: negative compact defect}
    \end{subfigure}\hfill
    \begin{subfigure}[t]{0.48\textwidth}
        \centering
        \includegraphics[width=\textwidth]{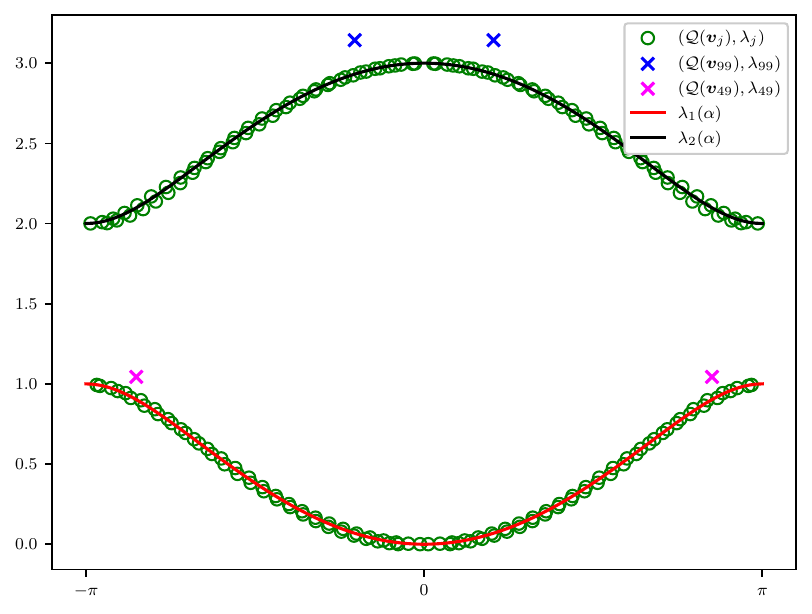}
        \caption{Positive defect, reconstructed band structure. Eigenvalues associated to localised eigenmodes are highlighted with the same colours as in (\textsc{b}).}
        \label{fig: positiv compact defect}
    \end{subfigure}
    \caption{Band reconstruction for a compact defect of the type of \eqref{eq: defect matrix}. For negative defects, no localised modes are generated and the band is recovered flawlessly. For positive defects, localised modes are generated and the corresponding eigenvalues stand out in the reconstructed band structure.}
    \label{fig: compact defect}
\end{figure}

\section{Concluding remarks} \label{sec:7}
In this paper, we have provided the mathematical foundations of the truncated Floquet-Bloch transform. We have applied the proposed method for finite systems of resonators to characterise both delocalised and localised eigenmodes. We have illustrated its efficiency and accuracy in a variety of numerical examples. Following this line of research, it would be very interesting to generalise the method and the results obtained in this paper to more general disordered structures than those studied in this paper. Another very interesting problem is 
to apply the truncated Floquet-transform to non-Hermitian systems in particular those exhibiting a skin effect (accumulation of the eigenmodes at one edge of the structure)
based on our previous results in \cite{ammari.barandun.ea2024Generalised,ammari.barandun.ea2024Mathematical}.

\section*{Acknowledgements}
The authors would like to thank Ping Liu for insightful discussions. They also thank  Bryn Davies and Erik Hiltunen for kindly sharing the code to generate the matrix associated to this physical system shown in \cref{fig: band reconstructed 3D cap mat}.  The work of AU was supported by Swiss National Science Foundation grant number 200021--200307.

\section*{Code availability}
The software used to produce the numerical results in this work is openly available at \\ \href{https://doi.org/10.5281/zenodo.13949362}{https://doi.org/10.5281/zenodo.13949362}.

\appendix
\section{Technical results}
We present here a series of rather straightforward but helpful results that are used in the main text.
The following well-known result comes from \cite[Section 6.1]{bottcher.silbermann1999Introduction}.
\begin{lemma}\label{lemma: norm of toeplitz operator via symbol}
    Let $f\in C_{k\times k}$. Then, we have 
    \begin{align*}
        \Vert T(f)\Vert = \Vert f\Vert_\infty.
    \end{align*}
\end{lemma}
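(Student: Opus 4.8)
The result is classical---it is recorded in \cite[Section~6.1]{bottcher.silbermann1999Introduction}---and the plan is to prove the two inequalities $\norm{T(f)}\le\norm{f}_\infty$ and $\norm{T(f)}\ge\norm{f}_\infty$ separately, where $\norm{f}_\infty\coloneqq\sup_{z\in\TO}\norm{f(z)}$ with $\norm{\cdot}$ the matrix operator norm. For the upper bound I would realise $T(f)$ as the compression $P\,L(f)\,\iota$ of the $k$-Laurent operator $L(f)=(a_{i-j})_{i,j\in\Z}$ on $\ell^2(\Z,\C^k)$ to the nonnegative coordinates, where $\iota:\ell^2(\N,\C^k)\hookrightarrow\ell^2(\Z,\C^k)$ is the inclusion and $P=\iota^*$; deleting the rows and columns of negative index turns $L(f)$ into exactly $T(f)$. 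The Fourier transform $(c_j)_{j\in\Z}\mapsto\sum_j c_j z^j$ is a unitary $\ell^2(\Z,\C^k)\to L^2(\TO,\C^k)$ conjugating $L(f)$ to the multiplication operator $M_f:g\mapsto f(\cdot)g(\cdot)$, and an elementary estimate gives $\norm{M_f}=\operatorname{ess\,sup}_{z\in\TO}\norm{f(z)}=\norm{f}_\infty$ (the last equality by continuity of $f$). Since $\norm{P}=\norm{\iota}=1$, this yields $\norm{T(f)}\le\norm{L(f)}=\norm{f}_\infty$, and as a by-product $T$ is $1$-Lipschitz from $(\setsymbols,\norm{\cdot}_\infty)$ into the operator norm, because $\norm{T(f)-T(g)}=\norm{T(f-g)}\le\norm{f-g}_\infty$.

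For the reverse inequality I would first take $f$ to be a Laurent polynomial $f(z)=\sum_{\abs s\le r}a_sz^s$ and test $T(f)$ against \emph{truncated Bloch waves}. Given $\epsilon>0$, choose $e^{\i\theta_0}\in\TO$ and a unit vector $\bm v\in\C^k$ with $\norm{f(e^{\i\theta_0})\bm v}>\norm{f}_\infty-\epsilon$, and for $N>2r$ let $\bm x_N\in\ell^2(\N,\C^k)$ be the unit vector whose $j$-th block is $\frac{1}{\sqrt N}e^{-\i\theta_0 j}\bm v$ for $0\le j\le N-1$ and $\bm 0$ otherwise. A direct block-by-block computation shows that the $i$-th block of $T(f)\bm x_N$ equals $\frac{1}{\sqrt N}e^{-\i\theta_0 i}f(e^{\i\theta_0})\bm v$ whenever $r\le i\le N-1-r$---there the relevant window of Fourier coefficients contains all of $a_{-r},\dots,a_r$---so that $\norm{T(f)\bm x_N}^2\ge\frac{N-2r}{N}\norm{f(e^{\i\theta_0})\bm v}^2$; letting $N\to\infty$ and then $\epsilon\to0$ gives $\norm{T(f)}\ge\norm{f}_\infty$, hence equality for Laurent polynomials. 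For a general symbol $f\in\setsymbols$ I would then approximate it uniformly by Laurent polynomials $f_n$ (Cesàro/Fejér means applied entrywise, which converge uniformly for every continuous symbol) and combine equality for the $f_n$ with the $1$-Lipschitz property to conclude $\norm{T(f)}=\lim_n\norm{T(f_n)}=\lim_n\norm{f_n}_\infty=\norm{f}_\infty$.

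Every computation here is routine, so the only step that calls for real care is the passage from Laurent polynomials to general continuous symbols: the symmetric Fourier partial sums---i.e.\ the $r$-banded truncations $f^{[r]}$---of a merely continuous function need not converge uniformly, which is why I would route the argument through Fejér means rather than truncating $f$ directly (under the extra regularity posited in \cref{assumptions} the truncations $f^{[r]}$ already suffice on their own). What then remains is the harmless bookkeeping of checking that the classical scalar facts---the Fourier transform intertwines $L(f)$ with $M_f$, $\norm{M_f}=\operatorname{ess\,sup}_{z\in\TO}\norm{f(z)}$, and Laurent polynomials are dense in $C(\TO,\C^{k\times k})$---carry over verbatim to the block/matrix-valued setting.
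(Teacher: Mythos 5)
The paper does not prove this lemma; it simply records it as a known fact from \cite[Section~6.1]{bottcher.silbermann1999Introduction}, so there is no ``paper's proof'' to compare against. Your argument is a complete and correct proof of the classical result, and it follows the standard route: realise $T(f)$ as the compression $P\,L(f)\,\iota$ of the Laurent operator, pass through the Fourier-transform unitary equivalence $L(f)\cong M_f$ to get $\norm{T(f)}\le\norm{L(f)}=\norm{M_f}=\norm{f}_\infty$, then obtain the reverse inequality by testing against truncated Bloch waves in the banded case and extending by density. I checked the block computation in the lower bound: with $T(f)_{ij}=a_{i-j}$ and $(\bm x_N)_j=\tfrac1{\sqrt N}e^{-\i\theta_0 j}\bm v$ one indeed gets $(T(f)\bm x_N)_i=\tfrac1{\sqrt N}e^{-\i\theta_0 i}f(e^{\i\theta_0})\bm v$ for $r\le i\le N-1-r$, giving $\norm{T(f)\bm x_N}^2\ge\tfrac{N-2r}{N}\norm{f(e^{\i\theta_0})\bm v}^2\to\norm{f(e^{\i\theta_0})\bm v}^2$, as you state. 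Your closing remark is also correct and worth keeping: under \cref{assumptions} the symbol is continuous and piecewise differentiable, so the symmetric partial sums $f^{[r]}$ already converge uniformly and the Fej\'er detour is unnecessary in the paper's setting, though routing through Fej\'er means is what makes the proof work for arbitrary continuous symbols. No gaps.
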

The next results are from \cite[Section 6.4]{bottcher.silbermann1999Introduction}. 
\begin{lemma}\label{lemma: norm of toeplitz sections converges}
    Let $f\in C_{k\times k}$. Then, we have 
    \begin{align*}
        \Vert T_m(f)\Vert \to \Vert T(f)\Vert \quad \text{as}\quad m\to\infty .
    \end{align*}
\end{lemma}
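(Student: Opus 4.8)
The plan is to establish the two one-sided bounds $\limsup_{m\to\infty}\Vert T_m(f)\Vert \le \Vert T(f)\Vert$ and $\liminf_{m\to\infty}\Vert T_m(f)\Vert \ge \Vert T(f)\Vert$, which together give the asserted convergence. For the first I would use only that $T_m(f) = P_m T(f) P_m$, where $P_m$ is the orthogonal projection onto the leading coordinates cutting out the $m$-th finite section (a contraction). Submultiplicativity of the operator norm then gives $\Vert T_m(f)\Vert \le \Vert P_m\Vert^2\,\Vert T(f)\Vert \le \Vert T(f)\Vert$ for every $m$, hence the $\limsup$ inequality.

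For the reverse inequality I would argue by approximation, using that the finitely supported sequences are dense in $\ell^2(\N)$ and that $P_m \to \mathrm{Id}$ strongly. Fix $\epsilon > 0$ and pick a unit vector $x$ with $\Vert T(f)x\Vert > \Vert T(f)\Vert - \epsilon$; since $y \mapsto \Vert T(f)y\Vert$ is continuous, $x$ may be taken finitely supported, so that $P_m x = x$ once $m$ is large. Then
\[
    \Vert T_m(f)\Vert \ge \Vert T_m(f)x\Vert = \Vert P_m\,T(f)x\Vert ,
\]
and $\Vert P_m\,T(f)x\Vert \to \Vert T(f)x\Vert$ as $m \to \infty$, because the tail of the fixed vector $T(f)x \in \ell^2(\N)$ has vanishing norm. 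Passing to the limit gives $\liminf_{m\to\infty}\Vert T_m(f)\Vert \ge \Vert T(f)x\Vert > \Vert T(f)\Vert - \epsilon$, and letting $\epsilon \to 0$ closes the argument.

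I do not expect a genuine obstacle here; the single point that deserves a line of care is the replacement of the near-maximising test vector by a finitely supported one, which is exactly where density of $\bigcup_N \operatorname{ran} P_N$ in $\ell^2(\N)$ and continuity of $x \mapsto \Vert T(f)x\Vert$ are used. As an alternative route in the self-adjoint setting in force here (under \cref{assumptions}, $T(f)$ is Hermitian), one may instead note that $T_m(f)$ is a principal submatrix of $T_{m+1}(f)$, so Cauchy interlacing makes $\Vert T_m(f)\Vert$ nondecreasing in $m$ and bounded above by $\Vert T(f)\Vert$; the sequence then converges, and the same density argument identifies its limit as $\Vert T(f)\Vert$.
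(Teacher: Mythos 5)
Your argument is correct. The paper does not give its own proof of this lemma; it simply cites it from B\"ottcher and Silbermann (Section 6.4), so there is no in-paper argument to compare against. Both of your two routes are standard and sound: the contraction bound $\Vert T_m(f)\Vert=\Vert P_m T(f)P_m\Vert\le\Vert T(f)\Vert$ gives the upper bound directly, and the density of finitely supported vectors in $\ell^2(\N)$ together with $P_m\to\mathrm{Id}$ strongly gives the lower bound; the normalisation detail when replacing the near-maximiser $x$ by a finitely supported vector is handled correctly by density in the unit sphere. The Cauchy-interlacing alternative is also valid here, since under \cref{assumptions} the symbol is Hermitian-valued so $T_m(f)$ is Hermitian, $T_m(f)$ is a principal compression of $T_{m+1}(f)$, and interlacing shows $\Vert T_m(f)\Vert$ is nondecreasing, hence convergent, with the same density argument identifying the limit. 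One cosmetic remark: the upper bound holds for any bounded $T(f)$ on $\ell^2(\N)$ without appeal to the Toeplitz structure, and the lower bound likewise uses only that $P_m\to\mathrm{Id}$ strongly, so your proof actually establishes the more general fact that $\Vert P_m A P_m\Vert\to\Vert A\Vert$ for any bounded operator $A$ on a separable Hilbert space with $P_m$ an increasing sequence of orthogonal projections converging strongly to the identity.
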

\begin{lemma}\label{lemma: strong convergence of sections}
Let $f\in\setsymbols$. Then $T_m(f)\to T(f)$ strongly in $\ell^2(\N)$, that is, for all $x\in \ell^2(\N)$ and for all $\epsilon>0$, there exists an $N_0\in\N$ so that for all $m>N_0$, 
\begin{align*}
    \Vert T_m x - Tx \Vert < \epsilon.
\end{align*}
\end{lemma}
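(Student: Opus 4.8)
The plan is to combine the uniform boundedness of the truncated operators with the elementary fact that the coordinate projections converge strongly to the identity. Recall that by definition $T_{mk}(f) = P_{mk}T(f)P_{mk}$, and that by \cref{lemma: norm of toeplitz operator via symbol} (or directly \cref{lemma: norm of toeplitz sections converges}) the operator $T(f)$ is bounded with $\Vert T(f)\Vert = \Vert f\Vert_\infty<\infty$, the finiteness coming from continuity of $f$ on the compact circle $\mathbb{T}^1$; moreover $\Vert P_{mk}\Vert\leq 1$ for every $m$.

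First I would record that $P_n\to\operatorname{Id}$ strongly on $\ell^2(\N)$: for any $x=(x_j)_j\in\ell^2(\N)$ one has $\Vert P_n x - x\Vert^2 = \sum_{j>n}\abs{x_j}^2$, which is the tail of a convergent series and hence tends to $0$ as $n\to\infty$. Since $mk\to\infty$ as $m\to\infty$, the same holds along the subsequence $n=mk$.

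Next, for a fixed $x\in\ell^2(\N)$ I would telescope
\[
    T_{mk}(f)x - T(f)x = P_{mk}T(f)\bigl(P_{mk}x - x\bigr) + \bigl(P_{mk}-\operatorname{Id}\bigr)T(f)x,
\]
and bound the two terms separately. The first is at most $\Vert P_{mk}\Vert\,\Vert T(f)\Vert\,\Vert P_{mk}x - x\Vert \leq \Vert f\Vert_\infty\,\Vert P_{mk}x - x\Vert \to 0$ by the previous paragraph. The second equals $\Vert (P_{mk}-\operatorname{Id})y\Vert$ with $y = T(f)x\in\ell^2(\N)$, which again tends to $0$ by strong convergence of the projections. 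The triangle inequality then gives $\Vert T_{mk}(f)x - T(f)x\Vert\to 0$, i.e.\ the claimed strong convergence.

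I do not expect a genuine obstacle: this is the standard "uniform bound plus strong convergence of projections" pattern. The only points needing a little care are purely bookkeeping — keeping the identification $T_{mk}(f)=P_{mk}T(f)P_{mk}$ and the block size $k$ consistent, and taking the tail estimates along the subsequence $mk$ rather than over all $n$ (harmless since $mk\to\infty$). One could alternatively first verify the identity on the dense subspace of finitely supported sequences, where $T_{mk}(f)x = T(f)x$ for $m$ large, and then extend using $\sup_m\Vert T_{mk}(f)\Vert\leq\Vert f\Vert_\infty$; this is essentially the same argument repackaged.
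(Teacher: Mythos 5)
Your proof is correct, and it follows the standard ``uniform bound plus strong convergence of projections'' pattern: write $T_{mk}(f)x - T(f)x = P_{mk}T(f)\bigl(P_{mk}x - x\bigr) + \bigl(P_{mk}-\operatorname{Id}\bigr)T(f)x$, control the first term by $\Vert T(f)\Vert\,\Vert P_{mk}x - x\Vert$ and the second by the $\ell^2$-tail of $T(f)x$. The paper's own proof is much terser and, read literally, contains a slip: it asserts the equality $\Vert T_m x - Tx\Vert^2 = \sum_{j>m}\Vert T_m - T\Vert^2\vert x_j\vert^2$, which is not actually an identity (the vector $T_m x - Tx$ generically has nonzero entries at every index, and the operator norm $\Vert T_m - T\Vert$ does not depend on $j$), and then jumps to a tail bound $\leq 3\Vert T\Vert\sum_{j>m}\vert x_j\vert^2$. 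What the paper evidently intends is precisely the two-term telescoping estimate you have spelled out, with the ``$3\Vert T\Vert$'' coming from $\Vert T_m - T\Vert \leq \Vert T_m\Vert + \Vert T\Vert$ together with $\Vert T_m\Vert\to\Vert T\Vert$. So the underlying idea is the same in both — the $\ell^2$-tail of $x$ (and of $T(f)x$) vanishes, and $\Vert T_m\Vert$, $\Vert T\Vert$ are uniformly bounded by $\Vert f\Vert_\infty$ — but your decomposition is the correct, complete way to organise it, and your alternative route through the dense subspace of finitely supported sequences plus the uniform norm bound is equally valid. Nothing essential differs; your write-up is simply the careful version of the paper's compressed argument.
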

\begin{proof}
    For $x\in \ell^2(\N)$, we can estimate
    \begin{align*}
        \Vert T_m x - Tx \Vert^2 = \sum_{j>m} \Vert T_m - T \Vert^2 \vert x_j\vert^2 \leq  3\Vert T \Vert \sum_{j>m}\vert x_j\vert^2 ,
    \end{align*}
    which converges as $x\in \ell^2(\N)$.
\end{proof}

\begin{lemma}\label{lemma: eve of full are pseudoeve of banded}
Consider a symbol $f\in\setsymbols$ such that the Fourier coefficients $a_k$ of $f$ decay as
$$
\Vert a_k \Vert_\infty = \mathcal{O}(k^{-p})
$$
for some $p>1$ and let $\epsilon>0$. Then, there exists an $r_0$ and an $N_0\in \N$ such that for any $N>N_0$ and $r>r_0$ if $T_N(f) u_N = \lambda_N u_N$, then the following estimate holds:
\begin{align*}
    \Vert T_N(\rapprox{f}{r})u_N - \lambda_N u_N \Vert < \epsilon.
\end{align*}
\end{lemma}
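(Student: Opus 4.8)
The plan is to reduce the claim to a bound on the operator norm of the Toeplitz section built from the \emph{tail} symbol $f-\rapprox{f}{r}=\sum_{\abs{s}>r}a_sz^s$, a quantity that carries no information about the particular eigenvector. Assuming $\norm{u_N}=1$ (as everywhere else in the paper) and using that $T_N(\cdot)$ is linear in the symbol together with $T_N(f)u_N=\lambda_N u_N$, I would write
\begin{align*}
    T_N(\rapprox{f}{r})u_N-\lambda_N u_N=\big(T_N(\rapprox{f}{r})-T_N(f)\big)u_N=-\,T_N\!\big(f-\rapprox{f}{r}\big)u_N,
\end{align*}
so that $\norm{T_N(\rapprox{f}{r})u_N-\lambda_N u_N}\le\norm{T_N(f-\rapprox{f}{r})}$. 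This already removes all dependence on the particular eigenvector.

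Next I would estimate $\norm{T_N(g)}$ for $g=f-\rapprox{f}{r}$, uniformly in $N$. Since $T_N(g)=P_NT(g)P_N$ is a compression of the Toeplitz operator $T(g)$ and $\norm{P_N}=1$, we have $\norm{T_N(g)}\le\norm{T(g)}$, and by \cref{lemma: norm of toeplitz operator via symbol} this equals $\norm{g}_\infty=\sup_{z\in\TO}\norm{\sum_{\abs{s}>r}a_sz^s}_{\mathrm{op}}$. The triangle inequality on $\TO$ (note $\abs{z^s}=1$) together with the equivalence of norms on $\C^{k\times k}$ gives a constant $c_k$ with $\norm{g}_\infty\le c_k\sum_{\abs{s}>r}\norm{a_s}_\infty$, and the decay hypothesis $\norm{a_s}_\infty=\BO(\abs{s}^{-p})$ with $p>1$ makes this tail summable, hence it tends to $0$ as $r\to\infty$. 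Choosing $r_0$ so large that the tail is below $\epsilon$ then finishes the proof; I would emphasise that the whole estimate is independent of $N$, so $N_0$ in the statement may be taken to be $1$ (the assertion is really only about $r$).

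There is no genuinely hard step here: the content is entirely in keeping track of the three norms that appear — the matrix $\infty$-norm in the decay hypothesis, the spectral (operator) norm of the $k\times k$ Fourier coefficients, and the sup norm of the matrix-valued symbol that enters \cref{lemma: norm of toeplitz operator via symbol} — and in observing that the compression inequality makes the bound uniform in $N$. If one preferred to mirror the ``$N$ large'' form of the statement literally, one could instead invoke the asymptotic statement $\norm{T_N(g)}\to\norm{g}_\infty$ of \cref{lemma: norm of toeplitz sections converges} in place of the compression bound, at the cost of reintroducing a harmless genuine threshold $N_0$.
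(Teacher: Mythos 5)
Your argument is correct and follows essentially the same route as the paper's proof: exploit linearity of $T_N(\cdot)$ in the symbol to reduce to bounding $\norm{T_N(f-\rapprox{f}{r})}$, control this via \cref{lemma: norm of toeplitz operator via symbol} by $\norm{f-\rapprox{f}{r}}_\infty$, and conclude by summing the tail of the decaying Fourier coefficients. Your version is in fact slightly tidier: the paper writes $\norm{T_N(f)-T_N(\rapprox{f}{r})}=\norm{f-\rapprox{f}{r}}_\infty$, citing a lemma that applies to the infinite operator $T(g)$, whereas for a finite section the correct relation is the compression inequality $\norm{T_N(g)}\le\norm{T(g)}=\norm{g}_\infty$ that you use; and your observation that the resulting bound is uniform in $N$, so $N_0=1$ suffices, is a small but genuine sharpening of how the statement is phrased.
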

\begin{proof}
    One notices that
    \begin{align*}
        \Vert T_N(\rapprox{f}{r})u_N - \lambda_N u_N \Vert &= \Vert (T_N(f) - T_N(\rapprox{f}{r}))u_N\Vert \\&\leq \Vert T_N(f) - T_N(\rapprox{f}{r})\Vert = \Vert f - \rapprox{f}{r} \Vert_\infty,
    \end{align*}
    where in the last equality we have used \cref{lemma: norm of toeplitz operator via symbol}. Therefore, 
    \begin{align*}
        \Vert f - \rapprox{f}{r} \Vert_\infty = \sup_{t\in \mathbb{T}^1} \Vert f(t) - \rapprox{f}{r}(t)\Vert_\infty\leq \sum_{k = r}^{N} \Vert a_k \Vert_\infty \leq \kappa\frac{N^{1-p}-(r-1)^{1-p}}{1-p}
    \end{align*}
    for some $\kappa>0$. This concludes the proof.
\end{proof}

The following result from \cite{feldman1993Finiteness} is also of use to us.
\begin{proposition}\label{prop: spectrum of block toeplitz is intervals and finite number of points}
    Let $f(z):\mathbb{T}^1\to\C^{k\times k}$ be a Hermitian matrix function holomorphic on $\mathbb{T}^1$ and assume that for any $\lambda$ the function $\det(f(z)-\lambda)$ does not vanish identically with respect to $z$. Then, the spectrum of the Toeplitz operator $T(f)$ consists of a finite number of intervals belonging to the essential spectrum and at most a finite set of isolated eigenvalues.
\end{proposition}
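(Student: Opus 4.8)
The plan is to split $\sigma(T(f))$ into its essential part and its purely discrete part and handle the two separately; the holomorphy hypothesis will be needed only for the finiteness of the latter. For the essential part, since $f$ is continuous and Hermitian-valued on $\TO$, the identity $\sigma_{ess}(T(f))=\bigcup_{e^{\i\alpha}\in\TO}\sigma(f(e^{\i\alpha}))=\bigcup_{p=1}^{k}\lambda_p(\TO)$ of \cite[Theorem 6.5]{bottcher.silbermann1999Introduction} (the same one used in \cref{def: band functions}) applies. Each eigenvalue branch $\lambda_p$ is a continuous real function on the connected compact set $\TO$, so $\lambda_p(\TO)$ is a compact interval, and $\sigma_{ess}(T(f))$ is a union of at most $k$ compact intervals, hence a finite union of closed intervals. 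This step uses neither the holomorphy nor the non-degeneracy of $\det(f(z)-\lambda)$.

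Next I would treat the discrete part. As $T(f)$ is bounded and self-adjoint, $\sigma(T(f))\subset\R$ is bounded and $\sigma(T(f))\setminus\sigma_{ess}(T(f))$ consists of isolated eigenvalues of finite multiplicity, so only their number is in question. The set $U\coloneqq\C\setminus\sigma_{ess}(T(f))$ is open and connected, being the complement of finitely many bounded closed real intervals; the family $\lambda\mapsto T(f)-\lambda$ is holomorphic on $U$, Fredholm of index zero there by self-adjointness, and invertible at every non-real $\lambda$. The analytic Fredholm theorem then shows that its non-invertibility set---which is exactly $\sigma(T(f))\cap U$---is discrete in $U$, with meromorphic resolvent. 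This reduces the claim to ruling out accumulation of eigenvalues at the finitely many band edges in $\partial\sigma_{ess}(T(f))$ (accumulation at $\pm\infty$ being impossible since $T(f)$ is bounded).

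The hard part is this band-edge analysis, and it is here that the holomorphy of $f$ and the assumption $\det(f(z)-\lambda)\not\equiv 0$ really enter. Fix a band edge $\lambda_*$ and a one-sided real half-neighbourhood $(\lambda_*,\lambda_*+\delta)$ contained in a gap. For such $\lambda$, $f(z)-\lambda$ is invertible for every $z\in\TO$, so it admits an analytic Wiener--Hopf factorisation with partial indices $\kappa_1(\lambda)\le\dots\le\kappa_k(\lambda)$ summing to $0$, and $\dim\ker(T(f)-\lambda)=-\sum_{\kappa_j(\lambda)<0}\kappa_j(\lambda)$, so $\lambda$ is an eigenvalue precisely when this factorisation is non-canonical. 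As $\lambda\downarrow\lambda_*$, only finitely many roots of $z\mapsto\det(f(z)-\lambda)$ can approach $\TO$---a cluster point on $\TO$ would be a non-isolated zero of the holomorphic, not-identically-vanishing function $\det(f(\cdot)-\lambda_*)$---and they approach at finitely many points of $\TO$. I would isolate this ``crossing'' part of the symbol by the Riesz-type projection associated with the roots of $\det(f(\cdot)-\lambda)$ enclosed by a fixed contour $\Gamma$ that surrounds precisely those points, a projection depending holomorphically on $\lambda$ on a full two-sided neighbourhood of $\lambda_*$; this is the Gohberg--Krein--Bojarski-type reduction, and it turns the canonicity of the factorisation---equivalently the invertibility of $T(f)-\lambda$---into the non-vanishing of a finite-dimensional determinant $D(\lambda)$ that is real-analytic on $[\lambda_*,\lambda_*+\delta)$. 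Since $\det(f(z)-\lambda)\not\equiv 0$ forces $D\not\equiv 0$, $D$ has only finitely many zeros near $\lambda_*$. Running this at each of the finitely many band edges, together with the discreteness on $U$ from the analytic Fredholm theorem, yields finiteness of $\sigma(T(f))\setminus\sigma_{ess}(T(f))$. I expect the delicate point to be making the root-projection reduction rigorous up to and including the band edge, where $T(f)-\lambda$ loses its Fredholm property.
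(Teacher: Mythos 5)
The paper does not actually prove this proposition: it is imported verbatim with a citation to Feldman (1993), so there is no in-paper argument to compare yours against. Your outline nonetheless follows what is, as far as I can tell, the intended and standard route: (a) $\sigma_{ess}(T(f))$ is a finite union of compact intervals by continuity of the band functions, needing neither hypothesis; (b) the analytic Fredholm theorem on the connected set $\C\setminus\sigma_{ess}(T(f))$ gives discreteness of $\sigma(T(f))\setminus\sigma_{ess}(T(f))$ away from band edges; (c) the remaining task is to rule out accumulation at the finitely many band edges, which is where analyticity of $f$ and the nondegeneracy of $\det(f(\cdot)-\lambda)$ must enter. Your identification of step (c) as the crux is exactly right, and the Wiener--Hopf/Gohberg--Krein reduction you sketch is the appropriate tool.

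Two points in step (c) are glossed over and are where the real work remains. First, the reduction to a finite-dimensional determinant $D(\lambda)$ holomorphic in a full neighbourhood of $\lambda_*$ is not an off-the-shelf lemma: individual roots of $\det(f(z)-\lambda)$ that cross $\TO$ at $\lambda_*$ generically have square-root behaviour in $\lambda$, so it is essential that the contour $\Gamma$ enclose the entire root cluster (so that the Riesz projection, not the individual roots, is holomorphic in $\lambda$), and the passage from that projection to a determinant whose zero set in the gap equals the eigenvalue set has to be spelled out; this is precisely the ``delicate point'' you flagged, and it does require an argument. Second, your justification that $D\not\equiv 0$ is slightly misplaced: the hypothesis $\det(f(z)-\lambda)\not\equiv 0$ in $z$ keeps the Riemann--Hilbert problem from degenerating pointwise, but what directly forces $D\not\equiv 0$ is the conclusion of step (b) — if $D$ vanished on a whole subinterval of the gap, that continuum would lie in $\sigma(T(f))$ and therefore in $\sigma_{ess}(T(f))$, a contradiction. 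With these two points tightened, your sketch is a faithful reconstruction of the argument on which this cited proposition rests.
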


\printbibliography

\end{document}